\newtheorem{theorem}{Theorem}
\newtheorem{lemma}[theorem]{Lemma}
\newtheorem{proposition}[theorem]{\bf Proposition}
\newtheorem{definition}[theorem]{Definition}
\newcommand{\qed}{\hfill $\Box$ \medbreak}
\newenvironment{proof}{\noindent {\bf Proof.}}{\qed}
\newcommand{\cA}{{\cal A}}
\newcommand{\cB}{{\cal B}}
\newcommand{\cC}{{\cal C}}
\newcommand{\cF}{{\cal F}}
\newcommand{\cL}{{\cal L}}
\newcommand{\cR}{{\cal R}}
\newcommand{\cS}{{\cal S}}
\newcommand{\local}{\mbox{{\tiny \sf LOC}}}
\newcommand{\aut}{\mbox{{\tiny\sf AUT}}}
\newcommand{\lcp}{\mbox{{\tiny\sf LCP}}}
\newcommand{\llcp}{\mbox{{\tiny\sf LogLCP}}}
\newcommand{\id}{\mbox{\footnotesize\rm id}}
\newcommand{\AMOS}{\mbox{\sc amos}}
\newcommand{\ALTS}{\mbox{\sc alts}}
\newcommand{\EXTS}{\mbox{\sc exts}}
\newcommand{\ITER}{\mbox{\sc iter}}
\newcommand{\AND}{\mbox{\sc and}}
\newcommand{\OR}{\mbox{\sc or}}
\newcommand{\TREE}{\mbox{\sc tree}}
\newcommand{\DIAM}{\mbox{\sc diam}}
\newcommand{\COVER}{\mbox{\sc cover}}
\newcommand{\MISS}{\mbox{\sc miss}}
\newcommand{\MISSLIFT}{\mbox{\sc miss}^\uparrow}
\newcommand{\COL}{\mbox{\sc prop-col}}
\newcommand{\ID}{\mbox{ID}}
\newcommand{\LD}{\mbox{{\sf LD}}}
\newcommand{\LDO}{\mbox{{\sf LDO}}}
\newcommand{\coLD}{\mbox{\rm co-{\sf LD}}}
\newcommand{\LCL}{\mbox{{\sf LCL}}}
\newcommand{\NLD}{\mbox{{\sf NLD}}}
\newcommand{\NLDO}{\mbox{{\sf NLDO}}}
\newcommand{\coNLD}{\mbox{\rm co-{\sf NLD}}}
\newcommand{\BPLD}{\mbox{{\sf BPLD}}}
\newcommand{\BPNLD}{\mbox{{\sf BPNLD}}}
\newcommand{\all}{\mbox{{\sf All}}}
\newcommand{\ind}{\mbox{{\rm index}}}
\newcommand{\data}{\mbox{{\rm data}}}
\newcommand{\dist}{\mbox{{\rm dist}}}
\begin{document}

\title{Local Distributed Verification\footnote{The first, third and fourth authors received additional supports from the ANR project DISPLEXITY.}}

\author[1,2]{Alkida Balliu}
\author[2]{Gianlorenzo D'Angelo}
\author[1]{Pierre Fraigniaud\thanks{Additional supports from the INRIA project GANG. }}
\author[1,2]{Dennis Olivetti}

\affil[1]{ CNRS and University Paris Diderot, France.}
\affil[2]{Gran Sasso Science Institute, L'Aquila, Italy.}

\date{}

\maketitle

\begin{abstract}
In the framework of distributed network computing, it is known that, for every network predicate, each network configuration that satisfies this predicate can be proved using distributed certificates which can be verified locally. However, this requires to leak information about the identities of the nodes in the certificates, which might not be applicable in a context in which privacy is desirable. Unfortunately, it is known that if one insists on certificates independent of the node identities, then not all network predicates can be proved using distributed certificates that can be verified locally. In this paper, we prove that, for every network predicate, there is a distributed protocol satisfying the following two properties: (1)~for every  network configuration that is legal w.r.t.~the predicate, and for any attempt by an adversary to prove the illegality of that configuration using distributed certificates, there is a locally verifiable proof that the adversary is wrong, also using distributed certificates; (2)~for every  network configuration that is illegal w.r.t.~the predicate, there is a proof of that illegality, using distributed certificates, such that no matter the way an adversary assigns its own set of distributed certificates in an attempt to prove the legality of the configuration, the actual illegality of the configuration will be locally detected. In both cases, the certificates are independent of the identities of the nodes. These results are achieved by investigating the so-called \emph{local hierarchy} of complexity classes in which the certificates do not exploit the node identities. Indeed, we give a characterization of such a hierarchy, which is of its own interest
\end{abstract}

\vfill

\thispagestyle{empty}
\pagebreak
\setcounter{page}{1}
 
\section{Introduction}

\subsection{Context and objective}

In the framework of distributed network computing, local decision is the ability to check the legality of network configurations using a local distributed algorithm. This concern is of the utmost importance in the context of fault-tolerant distributed computing, where it is highly desirable that the nodes are able to collectively check the legality of their current configuration, which could have been altered by the corruption of variables due to failures. In this paper, we insist on locality, as we want the checking protocols to avoid involving long-distance communications across  the network, for they are generally costly and potentially unreliable. More specifically, we consider the standard LOCAL model of computation in networks~\cite{Pel00}. Nodes are assumed to be given  distinct identities, and each node executes the same algorithm, which proceeds in synchronous rounds where all nodes start at the same time. In each round, every node sends messages to its neighbors, receives messages from its neighbors, and performs some individual computation. The model does not limit the amount of data sent in the messages, neither it  limits the amount of computation that is performed by a node during a round. Indeed, the model places emphasis on the number of rounds before every node can output, as a measure of locality. (Note however that, up to some exceptions, our positive results involves messages of logarithmic  size, and polynomial-time computation). A \emph{local algorithm} is a distributed algorithm $\cA$ satisfying that there exists a constant $t\geq 0$ such that $\cA$ terminates in at most $t$ rounds in all networks, for all inputs. The parameter $t$ is called the \emph{radius} of $\cA$. In other words, in every network $G$, and for all inputs to the nodes of $G$, every node executing $\cA$ just needs to collect all information present in its $t$-ball around it  in order to output, where the \emph{$t$-ball} of $u$ is the ball $B_G(u,t)=\{v\in V(G):\dist(u,v)\leq t\}$. 

Following the guidelines of~\cite{FraKorPel13}, we define a \emph{configuration} as a pair $(G,x)$ where $G=(V,E)$ is a connected simple graph, and $x:V(G)\to\{0,1\}^*$ is a function assigning an input $x(u)$ to every node $u\in V$. A \emph{distributed language} $\cL$ is a Turing-decidable set of configurations. Note that the membership of a configuration to a distributed language is independent of the identity that may be assigned to the nodes in the LOCAL model (as one may want to study the same language under different computational models, including ones that assume anonymous nodes). The class $\LD$ is the set of all distributed languages that are locally decidable. That is, $\LD$ is the class of all distributed languages $\cL$ for which there exists a local algorithm $\cA$ satisfying that, for every configuration $(G,x)$, 
\[
(G,x)\in\cL \iff \cA\; \mbox{accepts} \; (G,x)
\] 
where one says that $\cA$ accepts if it accepts at \emph{all} nodes. More formally, given a graph $G$, let $\ID(G)$ denote the set of all possible identity assignments to the nodes of $G$ (with distinct non-negative integers). Then $\LD$  is the class of all distributed languages $\cL$ for which there exists a local algorithm $\cA$ satisfying the following: for every configuration $(G,x)$, 
 \[
\begin{array}{lcl}
(G,x)\in \cL & \Rightarrow & \forall \id\in\ID(G), \forall u \in V(G), \cA(G,x,\id,u) = \mbox{accept}\\
(G,x)\notin \cL & \Rightarrow & \forall \id\in\ID(G), \exists u \in V(G),  \cA(G,x,\id,u) = \mbox{reject}
\end{array}
\] 
where $\cA(G,x,\id,u)$ is the output of Algorithm~$\cA$ running on the instance $(G,x)$ with identity-assignment $\id$, at node~$u$. For instance, the language $\COL$, composed of all (connected) properly colored graphs, is in $\LD$. Similarly, the class $\LCL$ of ``locally checkable labelings'', defined in~\cite{NaoSto95}, satisfies $\LCL\subseteq \LD$. In fact, $\LCL$ is precisely $\LD$ restricted to configurations on graphs with constant maximum degree, and inputs of constant size. 

The class $\NLD$ is the non-deterministic version of $\LD$, i.e., the class of all distributed languages $\cL$ for which there exists a local algorithm $\cA$ \emph{verifying} $\cL$, i.e., satisfying that, for every configuration $(G,x)$, 
\[
(G,x)\in\cL \iff \exists c, \cA\; \mbox{accepts $(G,x)$ with certificate $c$}.
\] 
More formally, $\NLD$  is the class of all distributed languages $\cL$ for which there exists a local algorithm $\cA$ satisfying the following: for every configuration $(G,x)$, 
\[
\begin{array}{lcl}
(G,x)\in \cL & \Rightarrow & \exists c\in\cC(G), \forall \id\in\ID(G), \forall u \in V(G),  \cA(G,x,c,\id,u) = \mbox{accepts}\\
(G,x)\notin \cL & \Rightarrow & \forall c\in\cC(G), \forall \id\in\ID(G), \exists u \in V(G),  \cA(G,x,c,\id,u) = \mbox{rejects}
\end{array}
\] 
where $\cC(G)$ is the class of all functions $c:V(G)\to\{0,1\}^*$, assigning certificate $c(u)$ to each node~$u$. Note that the certificates $c$ may depend on the network and the input to the nodes, but should be set independently of the actual identity assignment to the nodes of the network. This guarantees that no information is revealed about far away nodes in the network by the certificates, hence preserving some form of privacy.  In the following, for the sake of simplifying the notations, we shall omit specifying the domain sets $\cC(G)$ and $\ID(G)$ unless they are not clear from the context. 

It follows from the above that $\NLD$ is a class of distributed languages that can be locally \emph{verified}, in the sense that, on legal instances, certificates can be assigned to nodes by a \emph{prover} so that a \emph{verifier} $\cA$ accepts, and, on illegal instances, the verifier $\cA$ rejects (i.e., at least one node rejects) systematically, and cannot be fooled by any fake certificate. For instance, the language $\TREE=\{(G,x): \mbox{$G$ is a tree}\}$ is in $\NLD$, by selecting a root $r$ of the given tree, and assigning to each node $u$ a counter $c(u)$ equal to its hop-distance to $r$. If the given (connected) graph contains a cycle, then no counters could be assigned  to fool an algorithm checking that, at each node $u$ with $c(u)\neq 0$, a unique neighbor $v$ satisfies $c(v)<c(u)$. In~\cite{FHK12}, $\NLD$ was proved to be exactly the class of distributed languages that are closed under lift. 

Finally, \cite{FraKorPel13} defined the randomized versions $\BPLD_{p,q}$ and $\BPNLD_{p,q}$, of the aforementioned classes $\LD$ and $\NLD$, respectively, by replacing the use of a deterministic algorithm with the use of a randomized algorithm characterized by its probability $p$ of acceptance for legal instances, and its probability $q$ of rejection for illegal instances. By defining $\BPNLD=\cup_{p^2+q\geq 1}\BPNLD_{p,q}$, the landscape of local decision was pictured as follows: 
\[
\LD \subset \NLD \subset \BPNLD = \all
\]
where all inclusions are strict, and $\all$ is the set of all distributed languages. That is, every distributed language can be locally verified with constant success probabilities $p$ and $q$, for some $p$ and $q$ satisfying $p^2+q\geq 1$. In other words, by combining non-determinism with randomization, one can decide any given distributed language. However, this holds only up to a certain guaranty of success, which is only guarantied to satisfy $p^2+q\geq 1$. 

\subsection{Our contributions}

Following up the approach recently applied to \emph{distributed graph automata} in~\cite{Reiter15}, we observe that the class $\LD$ and $\NLD$ are in fact the basic levels of a ``local hierarchy'' defined as follows. Let $\Sigma_0^{\scriptsize \local}=\Pi^{\scriptsize \local}_0=\LD$, and, for $k\geq 1$, let $\Sigma_k^{\scriptsize \local}$ be the class of all distributed languages $\cL$ for which there exists a local algorithm $\cA$ satisfying that, for every configuration~$(G,x)$, 
\[
(G,x)\in\cL \iff \exists c_1, \forall c_2, \dots, Q c_k, \cA\; \mbox{accepts $(G,x)$ with certificates $c_1,c_2,\dots,c_k$}
\] 
where the quantifiers alternate, and $Q$ is the universal quantifier if $k$ is even, and the existential one if $k$ is odd. The class $\Pi_k^{\scriptsize \local}$ is defined similarly, by starting with a universal quantifier, instead of an existential one. A local algorithm $\cA$ insuring membership to a class $\cC\in\{\Sigma_k^{\scriptsize \local}, k\geq 0\}\cup\{\Pi_k^{\scriptsize \local}, k\geq 0\}$ is called a $\cC$-algorithm. Hence, $\NLD=\Sigma_1^{\scriptsize \local}$, and, for instance, $\Pi_2^{\scriptsize \local}$  is the class of all distributed languages $\cL$ for which there exists a $\Pi_2^{\scriptsize \local}$-algorithm, that is, a local algorithm $\cA$ satisfying the following: for every configuration $(G,x)$, 
\begin{equation}
\begin{array}{lcl}
(G,x)\in \cL & \Rightarrow & \forall c_1,\exists c_2, \forall \id, \forall u \in V(G),  \cA(G,x,c_1,c_2,\id,u) = \mbox{accept;}\\
(G,x)\notin \cL & \Rightarrow & \exists c_1, \forall c_2, \forall \id, \exists u \in V(G),  \cA(G,x,c_1,c_2,\id,u) = \mbox{reject.}
\end{array}
\label{eq:pi2}
\end{equation}

\noindent 
Our main results are the following. 

\begin{theorem}\label{theo:main1}
$\LD \subset \Pi_1^{\scriptsize \local} \subset \NLD = \Sigma_2^{\scriptsize \local} \subset \Pi_2^{\scriptsize \local} = \all$, where all inclusions are strict. 
\end{theorem}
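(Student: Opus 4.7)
\medskip
\noindent\textbf{Proof plan.}
The theorem splits into three inclusions, two equalities, and three strict separations; the plan is to handle them in the order $\LD\subseteq\Pi_1^{\scriptsize \local}\subseteq\NLD=\Sigma_2^{\scriptsize \local}\subseteq\Pi_2^{\scriptsize \local}=\all$, interleaving the strictness examples. The inclusions $\LD\subseteq\Pi_1^{\scriptsize \local}$ and $\NLD\subseteq\Sigma_2^{\scriptsize \local}$ are immediate: an $\LD$-algorithm is a $\Pi_1^{\scriptsize \local}$-verifier that ignores its certificate, and any $\NLD$-verifier becomes a $\Sigma_2^{\scriptsize \local}$-verifier with a vacuous universal quantifier.

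For the collapse $\Sigma_2^{\scriptsize \local}\subseteq\NLD$ I would exploit $t$-locality: given a $\Sigma_2^{\scriptsize \local}$-verifier $\cA$ of radius $t$ with certificates $(c_1,c_2)$, define an $\NLD$-verifier $\cB$ with single certificate $c'=c_1$ that at each node $v$ accepts iff, for every assignment $c_2:B_G(v,t)\to\{0,1\}^*$, $\cA(G,x,c_1,c_2,\id,v)$ accepts. Since $\cA$'s decision at $v$ depends only on the labeled $t$-ball at $v$, this inner universal is a local computation. Completeness uses the $\Sigma_2$-witness; soundness follows because the rejecting $(c_2,u)$ pair promised by the no-case restricts to a bad local certificate at $u$, which $\cB$ then detects.

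For $\Pi_1^{\scriptsize \local}\subseteq\NLD$ I plan to prove that every $\Pi_1^{\scriptsize \local}$-language is closed under lift, and then invoke the lift-closure characterization of~\cite{FHK12}. The key point: a lift $\pi:\tilde G\to G$ with $\tilde x=x\circ\pi$ is an isomorphism on each $t$-ball, so at any $\tilde u$ the labeled $t$-ball in $(\tilde G,\tilde x,\tilde c,\tilde\id)$ is isomorphic to the labeled $t$-ball at $\pi(\tilde u)$ in $(G,x,c,\id)$ for a suitable pullback $(c,\id)$, and universal acceptance on $(G,x)$ transports to the lift. For strictness $\Pi_1^{\scriptsize \local}\subsetneq\NLD$ I would take $\cL=\{(G,x):G\text{ is a cycle of even length}\}$: it is lift-closed, hence in $\NLD$, but nodes in a long even cycle and in a long odd cycle share identical $t$-ball views, so by local indistinguishability no $\Pi_1^{\scriptsize \local}$-verifier accepting every certificate on all even cycles can reject some certificate on an odd cycle. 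For $\LD\subsetneq\Pi_1^{\scriptsize \local}$ I would use a language whose ``yes-compatible'' predicate on $t$-balls is intrinsically $\Pi^0_1$ (expressible as $\forall c\,(\cdots)$ via a local algorithm but not decidable in bounded time by a Turing $\LD$-verifier).

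The hardest step will be $\Pi_2^{\scriptsize \local}=\all$. For any Turing-decidable language $\cL$, the plan is to let the adversary's $c_1(v)=(H,i_v)$ supply at each node a claimed full configuration $H$ together with a pointer $i_v\in V(H)$, and let the prover's $c_2$ supply a spanning-tree certificate of $G$ rooted at a distinguished node together with, for every vertex $j\in V(H)$, a tree-encoded witness identifying the unique node of $G$ claimed to map to $j$. The verifier at $v$ accepts if either (i) $c_1$ is locally inconsistent with $(G,x)$ at $v$ (self-refuting adversary), or (ii) the $c_2$ data is locally consistent at $v$ and, as a Turing-machine computation, $H\in\cL$. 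On yes-instances the prover either catches inconsistency via (i) when $c_1$ is wrong, or, when $c_1$ truly equals $(G,x)$, supplies the honest spanning tree and bijection witnesses making (ii) pass; on no-instances the adversary sets $c_1$ to the truth, so (i) never fires, and the $c_2$-bijection checks force the evaluated $H$ to be $(G,x)\notin\cL$, so (ii) fails everywhere. The principal obstacle is engineering the bijection witnesses in (ii) so that purely local checks rule out $G$ being a non-trivial cover of $H$; I would weave the unique-preimage data into the spanning tree so each node can locally cross-verify both its tree path from the root and the uniqueness of its pointer. For strictness $\NLD\subsetneq\Pi_2^{\scriptsize \local}$ I would take $\cL=\{(G,x):|V(G)|\text{ is prime}\}$: every non-trivial lift multiplies $|V|$ by some $k\geq 2$ producing a composite number, so $\cL$ is not closed under lift and hence not in $\NLD$, while trivially $\cL\in\all=\Pi_2^{\scriptsize \local}$.
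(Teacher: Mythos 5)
Your proposal is correct on the two trivial inclusions and on reducing $\Pi_1^{\scriptsize \local}\subseteq\NLD$ to lift-closure, but it contains two genuine errors and leaves the two hardest steps unproved. The most serious error is the argument for $\Sigma_2^{\scriptsize \local}\subseteq\NLD$. The verifier $\cB$ you describe, which at node $v$ accepts iff $\cA$ accepts for \emph{every} $c_2$ on $B_G(v,t)$, is not a Turing-computable local algorithm: the quantification ranges over an infinite set of unbounded bit strings, and nothing bounds the relevant certificates a priori. This is precisely the obstruction the paper isolates when explaining why $\LD\subsetneq\Pi_1^{\scriptsize \local}$ is non-trivial (``the function $f$ is however not an algorithm''); if one could brute-force a trailing universal quantifier inside a $t$-ball, $\Pi_1^{\scriptsize \local}$ would already collapse to $\LD$, which the paper proves false via $\ITER$. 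The paper instead proves $\Sigma_2^{\scriptsize \local}\subseteq\NLD$ by showing $\Sigma_2^{\scriptsize \local}$-languages are closed under lift (lift the witness $c_1$ along $\phi$; universal acceptance over $c_2$ transfers to the lift) and invoking Lemma~\ref{lem:lift}.

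The second error is your separating language for $\Pi_1^{\scriptsize \local}\subsetneq\NLD$. The set of even cycles is \emph{not} lift-closed in the sense of this paper, hence not in $\NLD$: for any fixed $t$, every $C_m$ with $m>2t+1$ has all its $t$-balls isomorphic to a path on $2t+1$ nodes, so $C_{2k+1}$ is a $t$-lift of $C_{2k}$ once both exceed $2t+1$. Note that a $t$-lift here is just a map $\phi:V(G')\to V(G)$ with isomorphic $t$-balls; it need not be a covering, so cardinalities need not multiply (the same misconception underlies the phrase ``multiplies $|V|$ by $k\geq 2$'' in your primality example, though there you reach a true conclusion by accident). The paper's example is $\ALTS$ (``at least two selected''), which is lift-closed yet cannot be decided in $\Pi_1^{\scriptsize \local}$ by a local-indistinguishability argument on a long cycle with three widely separated marked nodes. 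Finally, your $\LD\subsetneq\Pi_1^{\scriptsize \local}$ separation is only gestured at — producing an explicit such language (the paper's $\ITER$, encoding an undecidable iteration question across a rigidified path-with-trees) is the technical heart of that step — and in your $\Pi_2^{\scriptsize \local}=\all$ sketch the ``principal obstacle'' you yourself flag (locally excluding that $G$ is a nontrivial cover of the claimed $H$) is exactly the part that the paper's five-case $c_2$-certificate (distance fields to witnesses of a duplicated index, a missing index, or a local mismatch) is built to handle and that your sketch leaves open.
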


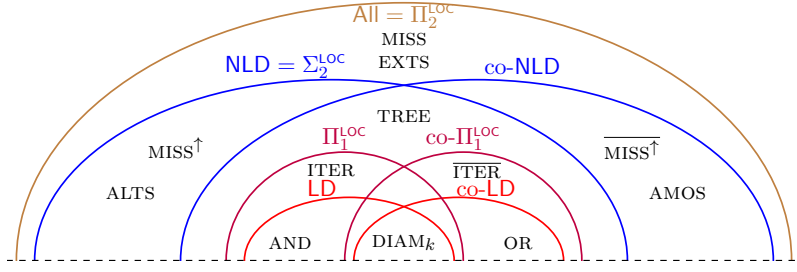
\begin{figure}[tb]
  \def\setld{(-1.2,0) ellipse (2.2cm and 2cm)}
  \def\setcold{(1.2,0) ellipse (2.2cm and 2cm)}
  \def\setnld{(-1.9,0) ellipse (5.5cm and 4cm)}
  \def\setconld{(1.9,0) ellipse (5.5cm and 4cm)}
  \def\setall{(0,0) ellipse (8.5cm and 5.7cm)}
  \centering
   \scalebox{0.8}{
   	\begin{tikzpicture}[scale=0.75]
   	\clip (-8.7,0) rectangle (8.7,6.5);
   	\draw[red,thick] (-1.2,0) ellipse (2.3cm and 1.4cm);
   	\draw (-1.8,1.6) node[text=red]{$\LD$};
   	\draw[red,thick] (1.2,0) ellipse (2.3cm and 1.4cm);
   	\draw (1.8,1.6) node[text=red] {$\coLD$};
   	\draw[purple,thick] (-1.3,0) ellipse (2.6cm and 2.4cm);
   	\draw (-1.3,2.7) node[text=purple]{$\Pi_1^{\scriptsize \local} $};
   	\draw[purple,thick] (1.3,0) ellipse (2.6cm and 2.4cm);
   	\draw (1.3,2.7) node[text=purple] {co-$\Pi_1^{\scriptsize \local} $};
   	\draw[blue,thick] (-1.6,0) ellipse (6.5cm and 4cm);
   	\draw(-2.6,4.3) node[text=blue] {$\NLD=\Sigma_2^{\scriptsize \local} $};
   	\draw[blue,thick] (1.6,0) ellipse (6.5cm and 4cm);
   	\draw(2.6,4.3) node[text=blue] {\coNLD};
   	\draw[brown,thick] (0,0) ellipse (8.5cm and 5.7cm);
   	\draw(0,5.4) node[text=brown] {$\all=\Pi_2^{\scriptsize \local} $};
   	\draw[dashed,thick] (-9,0) -- (9,0);
   	\draw (0,0.4) node{$\DIAM_k$};
   	\draw (-2.5,0.4) node{\AND};
  	\draw (2.5,0.4) node{\OR};
   	\draw (0.0,3.2) node{\TREE};
  	\draw (-6,1.5) node{\ALTS};
   	\draw (6,1.5) node{\AMOS};
   	\draw (-1.6,2) node{\ITER};
   	\draw (1.6,2) node{$\overline{\ITER}$};
   	\draw (0,4.4) node{\EXTS};
   	\draw (0,4.9) node{\MISS};
   	\draw (-5,2.5) node{$\MISSLIFT$};
   	\draw (5,2.5) node{$\overline{\MISSLIFT}$};
   	\end{tikzpicture}
   }
\caption{\sl Relations between the different decision classes of the local hierarchy (the definitions of the various languages can be found in the text).}
\label{fig:summary}
\end{figure}

That is, $\Pi_1^{\scriptsize \local}\supset \Pi_0^{\scriptsize \local}$, while $\Sigma_2^{\scriptsize \local}=\Sigma_1^{\scriptsize \local}$, and the whole local hierarchy collapses to the second level, at $ \Pi_2^{\scriptsize \local}$. This collapsing has a significant impact on our ability to certify the legality, or correctness of a network configuration w.r.t.~any (Turing decidable) boolean predicate on networks. Indeed, $\Pi_2^{\scriptsize \local} = \all$ says that, for every network predicate, there is a distributed protocol satisfying the following two properties: 
\begin{enumerate}
\item For every  network configuration that is legal w.r.t.~the predicate, and for any attempt by an adversary to prove the illegality of that configuration using distributed certificates, there is a locally verifiable proof that the adversary is wrong, also using distributed certificates, whose setting is \emph{independent of the node identities}. 
\item For every  network configuration that is illegal w.r.t.~the predicate, there is a proof of that illegality, using distributed certificates whose setting is \emph{independent of the node identities}, such that no matter the way an adversary assigns its own set of distributed certificates in an attempt to prove the legality of the configuration, the actual illegality of the configuration will be locally detected.
\end{enumerate}

We complete our description of the local hierarchy by a collection of separation and completeness results regarding the different classes and co-classes in the hierarchy. In particular, we revisit the completeness results in~\cite{FraKorPel13}, and show that the notion of reduction introduced in this latter paper is too strong, and may allow a language outside $\NLD$ to be reduced to a language in $\NLD$. We introduce a more restricted form of local reduction, called \emph{label-preserving}, which does not have this undesirable property, and we establish the following. 

\begin{theorem}\label{theo:main2}
 $\NLD$ and $\Pi_2^{\scriptsize \local}$ have complete distributed languages for local label-preserving reductions. \end{theorem}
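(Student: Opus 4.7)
The plan is to follow the classical Cook--Levin template in a distributed setting: for each of $\NLD$ and $\Pi_2^{\scriptsize\local}$, introduce a single ``universal'' language whose configurations carry, as part of each node's input, an explicit description of an arbitrary local verifier; then reduce any language in the class to it through the trivial $0$-round map that pastes the (fixed) description of its own verifier into each node's input. The key feature of this map is that it depends only on $(G,x)$ and not on the identity assignment, so it is label-preserving by construction.

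For $\NLD$, I would define $\cL^{\ast}_{\NLD}$ to be the set of configurations $(G,(x,M))$ in which (i)~every node carries the same string $M$, encoding a local verifier $\cA_M$ together with its radius~$t_M$, and (ii)~there exists a certificate $c$ such that $\cA_M$ accepts $(G,x,c)$ at every node. Membership in $\NLD$ is witnessed by the verifier that first checks on every edge that the two endpoints carry the same $M$---which, by connectivity of $G$, forces global agreement on $M$---and then simulates $\cA_M$ on the $t_M$-ball using the supplied certificate. For any $\cL\in\NLD$ with verifier $\cA$, the map $f:(G,x)\mapsto(G,(x,\langle\cA\rangle))$ inspects neither $\id$ nor any neighbourhood of the nodes (it is computable in zero rounds), so it is a label-preserving local reduction, and by construction $(G,x)\in\cL$ iff $f(G,x)\in\cL^{\ast}_{\NLD}$.

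The construction for $\Pi_2^{\scriptsize\local}$ is entirely parallel: define $\cL^{\ast}_{\Pi_2}$ by imposing, on top of the consistency of $M$, the $\forall c_1\exists c_2$ semantics of the encoded verifier $\cA_M$; the straightforward $\Pi_2^{\scriptsize\local}$-algorithm one obtains from Theorem~\ref{theo:main1} places $\cL^{\ast}_{\Pi_2}$ in the right class, and the same identity-oblivious packaging map $f$ furnishes a label-preserving local reduction from any $\cL\in\Pi_2^{\scriptsize\local}$ to $\cL^{\ast}_{\Pi_2}$.

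The subtle point---and the place I expect to be most careful---is that the universal verifier's radius must be a fixed constant independent of the configuration, whereas the simulation above uses radius $t_M$, which varies with $M$. The standard way around this is to absorb the ball-collection step into the certificate: the prover additionally decorates each node with an alleged description of its $t_M$-ball together with the identifiers of the nodes it contains, and the universal verifier then only performs a radius-$1$ consistency check between neighbouring ball-descriptions before running $\cA_M$ locally on the claimed ball. This adjustment leaves the map $f$ unchanged---it is still a $0$-round, identity-oblivious transformation---so both the locality and the label-preserving character of the reduction are unaffected, and the same modification handles $\cL^{\ast}_{\Pi_2}$ verbatim.
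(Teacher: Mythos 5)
Your construction has a fundamental gap for the $\NLD$ case: the universal language $\cL^{\ast}_{\NLD}$ is not closed under lift, and therefore, by Lemma~\ref{lem:lift}, is simply not in $\NLD$. The trouble is that $\cL^{\ast}_{\NLD}$ quantifies over \emph{all} encoded verifiers $\cA_M$, including ones with arbitrarily large radius $t_M$. Concretely, fix any candidate lift-radius $t$ and let $\cA_M$ be the (certificate-ignoring) verifier with $t_M=t+1$ that accepts at $u$ iff the $(t+1)$-ball of $u$ contains a degree-$1$ vertex. On the path $G$ with $2t+3$ nodes, every node accepts, so $(G,(x,M))\in\cL^{\ast}_{\NLD}$; yet any sufficiently long path $G'$ is a $t$-lift of $G$ (all $t$-balls match up), and the middle nodes of $G'$ see no leaf within radius $t+1$, so $(G',(x',M))\notin\cL^{\ast}_{\NLD}$. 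Since $t$ was arbitrary, no lift-radius works, hence $\cL^{\ast}_{\NLD}\notin\NLD$. This is not a cosmetic problem that certificate engineering can repair: your proposed fix of packing the alleged $t_M$-ball into the certificate changes only the \emph{verification strategy}, not the language itself, and a language outside $\NLD$ has no $\NLD$ verifier at all. (There are two further difficulties you gloss over: storing ``the identifiers of the nodes it contains'' in a certificate contradicts the $\NLD$ requirement that certificates be fixed \emph{before} and independently of the identity assignment; and the definition of $\cL^{\ast}_{\NLD}$ via ``$\exists c$ such that $\cA_M$ accepts'' is a $\Sigma^0_1$ condition over unbounded strings, so it is not obviously a Turing-decidable set, i.e.\ not obviously a distributed language in the paper's sense. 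The analogous $\Pi^0_2$ issue arises for $\cL^{\ast}_{\Pi_2}$.)

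The paper's route is genuinely different and sidesteps all of this. Instead of encoding a verifier, the complete language $\MISSLIFT$ has each node carry a family $\cF(u)$ of forbidden configurations, and membership means $(G,x')\notin\cF^{\uparrow}$ where $\cF^{\uparrow}$ is the \emph{lift-closure} of $\cup_u\cF(u)$; this makes lift-closure of $\MISSLIFT$ automatic, so $\MISSLIFT\in\NLD$ for free. The reduction from an arbitrary $\cL\in\NLD$ is emphatically \emph{not} identity-oblivious --- each node uses its identity (via the width $\omega(u)=2^{|\id(u)|+|x(u)|}$) to enumerate small configurations outside $\cL$ into its family --- yet the pair $(R,\cA)$ is label-preserving because the certificates used by the verifier depend only on the structure $(G,x')$ and never on the families $\cF(u)$. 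This is exactly the subtlety that Definition~\ref{def:reductino} is designed to isolate: the reduction may consult identities, but the prover's certificates in the target language must not. Your plan collapses this distinction into ``the reduction is $0$-round and identity-oblivious,'' which is too rigid --- there is no obvious identity-oblivious way to tell a node how large a configuration it should describe, which is why the paper routes size information through identities in the reduction while keeping the certificates structural.
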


\noindent Finally, Figure~\ref{fig:summary} summarizes all our separation results.

\subsection{Related Work}

Several form of ``local hierarchies'' have been investigated in the literature, with the objective of understanding the power of local computation, and/or for the purpose of designing verification mechanisms for fault-tolerant computing. In particular, \cite{Reiter15} has investigated the case of \emph{distributed graph automata}, where the nodes are finite automata, and the network is anonymous (which are weaker assumptions than those in our setting), but also assuming an arbitrary global interpretation of the individual decisions of the nodes (which is a stronger assumption than those in our setting). It is shown that all levels $\Sigma_k^{\scriptsize \aut}$, $k\geq 0$, of the resulting hierarchy are separated, and that the whole local hierarchy is exactly composed of the MSO (monadic second order) formulas on graphs. 

In the framework of distributed computing, where the computing entities are Turing machines, \emph{proof-labeling schemes} (PLS) \cite{KorKutPel10}, extended to \emph{locally checkable proofs} (LCP)~\cite{GS11}, give the ability to certify predicates using certificates that can take benefits of the node identities. That is, for the same network predicate, and the same legal network configuration, the distributed proof that this configuration is legal may be different if the node identities are different. In this context, the whole hierarchy  collapses at the first level, with $\Sigma_1^{\scriptsize \lcp}=\all$. However, this holds only if the certificates can be as large as $\Omega(n^2)$ bits. In~\cite{FFH16}, the class LogLCP~\cite{GS11}, which bounds the certificate to be of size $O(\log n)$ bits is extended to a hierarchy. In particular, it is shown that MST stands at the second level $\Pi_2^{\scriptsize \llcp}$ of that hierarchy, while there are languages outside the hierarchy. 

In \cite{FraKorPel13}, the authors introduced the model investigated in this paper. In particular, they defined and characterized the class $\NLD$, which is noting else than $\Sigma_1^{\scriptsize \local}$, that is, the class of languages that have a proof-labeling scheme in which the certificates are \emph{not} depending on the node identities. It is proved that, while $\NLD\neq\all$, randomization helps a lot, as the randomized version $\BPNLD$ of $\NLD$ satisfies $\BPNLD=\all$. It is also proved that, with the oracle \#nodes providing each node with  the number of nodes in the network, we get $\NLD^{\mbox{\footnotesize \#nodes}}=\all$. Interestingly, it was proved~\cite{FHK12} that restricting the verification algorithms for $\NLD$ to be \emph{identity-oblivious}, that is, enforcing that each node decides the same output for every identity-assignment to the nodes in the network, does not reduce the ability to verify languages. This is summarized by the equality $\NLDO=\NLD$ where the ``O'' in $\NLDO$ stands for identity-oblivious. Instead, it was recently proved that restricting the  algorithms to be identity-oblivious reduces the ability to decide languages locally, i.e., $\LDO\subsetneq \LD$ (see~\cite{FGKS13}). 

Finally, it is worth mentioning that the ability to decide a distributed language locally has impact on the ability to design \emph{construction} algorithms~\cite{Lin92} for that language (i.e., computing outputs $x$ such that the configuration $(G,x)$ is legal w.r.t. the specification of the task). For instance, it is known that if $\cL$ is locally decidable, then any randomized local construction algorithm for $\cL$ can be derandomized~\cite{NaoSto95}. This result has been recently extended~\cite{FF15} to the case of languages that are locally decidable by a randomized algorithm (i.e., extended from $\LD$ to $\BPLD$ according to the notations in~\cite{FraKorPel13}). More generally, the reader is invited to consult~\cite{FKKS09,KMW04,LOW08,LW08,Pel00,Suo13} for good introductions to local computing,  and/or  samples of significant results related to local computing.

\section{All languages are $\Pi_2^{\scriptsize \local}$ decidable}
\label{sec:allinPi2}

In this section, we show the last equality of Theorem~\ref{theo:main1}.

\begin{proposition}\label{prop:Pi2isall}
\em $\Pi_2^{\scriptsize \local} = \all$. 
\end{proposition}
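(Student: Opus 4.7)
The plan is to construct, for an arbitrary distributed language $\cL \in \all$, a $\Pi_2^{\scriptsize \local}$-verifier $\cA$ based on an adversary/prover game. The first certificate $c_1$ is parsed as the adversary's claim about the whole configuration: at each node $u$, $c_1(u) = ((H,y), v_u, n)$, where $(H,y)$ is a claimed graph-with-labels (intended to be repeated identically at every node), $v_u\in V(H)$ is the claimed position of $u$ in $H$, and $n$ is the claimed value of $|V(H)|$. The second certificate $c_2$ is parsed as the prover's response: at each $u$, $c_2(u)$ carries (i) a global mode flag $m\in\{\mathrm{ACC},\mathrm{REJ}\}$, (ii) spanning-tree data $\tau(u)$ of $G$ (parent pointer, distance to root, subtree size), and (iii) in $\mathrm{REJ}$ mode only, a distance $w(u)$ to a designated refutation witness together with a pointer to a neighbor realizing $w(u)-1$.

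The verifier $\cA$ at $u$ performs the local checks: (C1) all neighbors agree on $((H,y), n, m)$; (C2) the local-isomorphism conditions on $c_1$, namely $\{v_u, v_{u'}\}\in E(H)$ for every $G$-neighbor $u'$ of $u$, $y(v_u)=x(u)$, and the restriction of $u'\mapsto v_{u'}$ to the $G$-neighbors of $u$ is a bijection onto the $H$-neighbors of $v_u$; (C3) local spanning-tree consistency and the subtree-size recurrence in $\tau(u)$; (C4) if $u$ is the spanning-tree root, the subtree size equals $n$. In mode $\mathrm{ACC}$, node $u$ accepts iff (C1)--(C4) all pass and $(H,y)\in\cL$ (decidable by the Turing machine for $\cL$ applied to the description that $c_1$ supplies locally). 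In mode $\mathrm{REJ}$, $u$ accepts iff (C1) and (C3) pass and either $w(u)=0$ with at least one of (C2), (C4) genuinely failing at $u$, or $w(u)>0$ with some neighbor $u'$ satisfying $w(u')=w(u)-1$ in the same mode.

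Correctness has two directions. If $(G,x)\in\cL$, then for any $c_1$: when $c_1$ faithfully describes $(G,x)$ (the local-iso conditions hold everywhere and $n=|V(G)|$), the prover plays $\mathrm{ACC}$ with a true spanning tree, all checks pass, and the verifier accepts since $(H,y)\cong(G,x)\in\cL$; otherwise the prover plays $\mathrm{REJ}$ everywhere, aiming the pointer chain at either a node where (C2) genuinely fails, or, if (C2) holds everywhere, at the spanning-tree root (whose true subtree size $|V(G)|$ must differ from $n$). Conversely, if $(G,x)\notin\cL$, the adversary plays the true description of $(G,x)$ as $c_1$: any $\mathrm{ACC}$-mode $c_2$ is rejected because $(H,y)=(G,x)\notin\cL$; any $\mathrm{REJ}$-mode $c_2$ cannot produce a valid witness, since every node passes (C2) and the spanning-tree root passes (C4); and any mixed-mode $c_2$ is caught by (C1) at the mode boundary.

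The main obstacle, and the reason $\NLD = \Sigma_1^{\scriptsize \local}$ does not already suffice, is closing the gap between ``$c_1$ is locally consistent'' and ``$c_1$ is globally faithful''. This is handled by a covering-space argument: condition (C2) makes $u\mapsto v_u$ a locally bijective homomorphism from the connected graph $G$ onto the connected graph $H$, hence a covering map, whose multiplicity $k$ satisfies $|V(G)| = k\cdot |V(H)|$. The equality $|V(G)| = n = |V(H)|$, certified by the standard spanning-tree-plus-subtree-size gadget (the root announces $|V(G)|$ via a locally verifiable recurrence, an $\NLD$ technique), forces $k=1$, so the map is in fact an isomorphism from $(G,x)$ to $(H,y)$. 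The second quantifier is essential precisely because this falsification mode of $c_2$ --- e.g., pointing to a global size mismatch at the root --- only becomes exploitable once the adversary has committed to the description $c_1$.
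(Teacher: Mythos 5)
Your high-level architecture is the same as the paper's: $c_1$ is the adversary's announced description of the whole configuration together with each node's claimed position in it, and $c_2$ is the prover's rebuttal, either ``the description is correct, run the Turing machine'' or ``here is a locally checkable contradiction.'' The covering-map observation (that condition (C2) makes $u\mapsto v_u$ a locally bijective homomorphism onto $H$, hence a $k$-fold cover with $|V(G)|=k\,|V(H)|$) is also correct and is implicitly what the paper exploits.

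The gap is the size-verification gadget. In $\Pi_2^{\scriptsize\local}$ the spanning-tree data $\tau$ sits in $c_2$, i.e.\ it is chosen by the player who \emph{wants acceptance on illegal instances}. Take $(G,x)\notin\cL$ and let the adversary play the faithful $c_1$ (so (C2) holds everywhere and $n=|V(G)|=|V(H)|$). The prover can now present a multi-rooted spanning \emph{forest}: e.g.\ on a path $v_1v_2v_3v_4$ declare $v_1,v_4$ both roots with distance $0$ and subtree size $2$, $v_2$ with parent $v_1$, $v_3$ with parent $v_4$. Every local check in (C3) --- parent has distance one less, subtree size equals one plus children's sizes --- passes at every node, yet the root $v_1$ has subtree size $2\neq n$, so (C4) ``genuinely fails'' at $v_1$ and the prover aims the $w$-chain there, making every node accept. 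You cannot fix this by strengthening (C3): detecting that a locally consistent parent/distance/size labelling has a \emph{unique} root is exactly the kind of global counting that is impossible without identity-dependent certificates (this is why $\AMOS\notin\NLD$, and why $\NLD^{\#\mathrm{nodes}}=\all$ in \cite{FraKorPel13}). The paper's construction sidesteps this entirely: the rebuttal $c_2$ never points to a defect of $c_2$ itself but only to defects of $c_1$ --- two distance chains whose endpoints must both carry a designated index $i$ (impossible if $\ind$ is injective), a broadcast index $i$ that no node holds (the node holding it refutes this by itself), or a one-hop neighbourhood that disagrees with $M$. Each of these is a one-round check that cannot be spoofed when $c_1$ is faithful. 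A secondary, smaller issue: you never state that the verifier checks $n=|V(H)|$ (both are readable from $c_1$); without it, on a \emph{legal} instance the adversary can set $n=|V(G)|$ while $H$ is a proper quotient, defeating both your (C2)-based and (C4)-based witnesses.
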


\begin{proof}
Let $\cL$ be a distributed language.  We give an explicit $\Pi_2^{\scriptsize \local}$-algorithm for $\cL$, i.e., a local algorithm $\cA$ such that, for every configuration $(G,x)$, Eq.~\eqref{eq:pi2} is satisfied. For this purpose, we describe the distributed certificates~$c_1$ and~$c_2$. Intuitively, the certificate~$c_1$ aims at convincing each node that $(G,x)\not\in \cL$, while~$c_2$ aims at demonstrating the opposite. More precisely, at each node~$u$ in a configuration $(G,x)$, the certificate $c_1(u)$ is interpreted as a triple $(M(u),\data(u),\ind(u))$ where~$M(u)$ is an $m \times m$ boolean matrix, $\data(u)$ is a linear array with $m$ entries, and $\ind(u)\in\{1,\dots,m\}$. Informally, $c_1(u)$ aims at proving to node~$u$ that it is node labeled $\ind(u)$ in the $m$-node graph with adjacency matrix~$M(u)$, and that the whole input data is~$\data(u)$.  We denote by~$n$ the number of nodes of the actual graph~$G$. 

For a legal configuration $(G,x)\in \cL$, given $c_1$, the certificate~$c_2$  is then defined as follows. It is based on the identification of a few specific nodes, that we call \emph{witnesses}. Intuitively, a witness is a node enabling to demonstrate that the structure of the configuration $(G,x)$ does not fit with the given certificate $c_1$. Let $\dist(u,v)$ denote the distance between any two nodes $u$ and $v$ in the actual network~$G$, that is, $\dist(u,v)$ equals the number of edges of a shortest path between $u$ and $v$ in $G$. A certificate $c_2(u)$ is of the form $(f(u),\sigma(u))$ where $f(u)\in\{0,\dots,4\}$ is a flag, and $\sigma(u)\in\{0,1\}^*$ depends on the value of the flag. 

\medskip
\noindent  -- Case 0: There are two adjacent nodes $v\neq v'$ such that $(M(v),\data(v))\neq (M(v'),\data(v'))$, or there are nodes $v$ in which $c_1(v)$ cannot be read as a triple $(M(v),\data(v),\ind(v))$. Then we set one of these nodes as witness~$w$, and we set $c_2(u) = (0,\dist(u,w))$ at every node~$u$. 

\medskip

\noindent Otherwise, i.e., if the pair $(M(u),\data(u))$ is identical to some pair $(M,\data)$ at every node~$u$:

%
\medskip
\noindent -- Case 1: $(G,x)$ is isomorphic to $(M,\data)$, preserving the inputs, denoted by $(G,x) \sim (M,\data)$, and $\ind()$ is injective. Then we set $c_2(u) = (1)$ at every node~$u$. 

\noindent -- Case 2: $n>m$, i.e., $|V(G)|$ is larger than the dimension $m$ of $M$, or $\ind()$ is not injective. Then we set the certificate $c_2(u) = (2,i,d(u,w),d(u,w'))$ where $i\in\{1,\dots,m\}$, and $w\neq w'$ are two distinct nodes such that $\ind(w)=\ind(w')=i$. These two nodes $w$ and $w'$ are both witnesses. 

\noindent -- Case 3: $n<m$ and $\ind()$ is injective. Then we set $c_2(u) = (3,i)$ where $i\in\{1,\dots,m\}$ is such that $\ind(v)\neq i$ for every node~$v$. 

\noindent -- Case 4: $n=m$ and $\ind()$ is injective, but $(G,x)$ is not isomorphic to $(M,\data)$. Then we set as witness  a node~$w$ whose neighborhood in $(G,x)$ does not fit with what it should be according to  $(M,\data)$, and we set $c_2(u) = (4,d(u,w))$ for every node~$u$. 
\medskip

The local verification algorithm $\cA$ then proceeds as follows. First, every node $u$  checks whether its flag $f(u)$ in $c_2(u)$ is identical to all the ones of its neighbors, and between~0 and~4. If not, then $u$ rejects. Otherwise, $u$ carries on executing the verification procedure. Its behavior depends on the value of its flag. 

\medskip
\noindent -- If $f(u)=0$, then $u$ checks that at least one of its neighbors has a distance to the witness that is smaller than its own distance. A node with distance $0$ to the witness checks that there is indeed an inconsistency with its $c_1$ certificate (i.e., its $c_1$ certificate cannot be read as a pair matrix-data, or its $c_1$ certificate is distinct from the one of its neighbors). Every node accepts or rejects accordingly. 

\noindent -- If $f(u)= 1$, then $u$ accepts or rejects according to whether $(M(u),\data(u))\in\cL$ (recall that, by definition, we consider only distributed languages $\cL$ that are Turing-decidable).

\noindent -- If $f(u)= 2$, then $u$ checks that it has the same index $i$ in its certificate $c_2$ as all its neighbors. If that is not the case, then it rejects. Otherwise, it checks each of the two distances in its certificate $c_2$ separately, each one as in the case where $f(u)=0$. A node with one of the two distances equal to $0$ also checks that its $c_1$ index is equal to the index~$i$ in $c_2$. If that is not the case, or if its two distances are equal to~$0$, then it rejects. If all the test are passed, then $u$ accepts. 

\noindent -- If $f(u)=3$, then $u$ accepts if and only if it has the same index $i$ in its $c_2$ certificate as all its neighbors, and $\ind(u)\neq i$.

\noindent -- If $f(u)=4$, then $u$ checks the distances as in the case where $f(u)=0$.  A node with distance~0 also checks that its neighborhood in the actual configuration $(G,x)$ is not what it should be according to $(M,\data)$. It accepts or rejects accordingly. 
\medskip

To prove the correctness of this Algorithm~$\cA$, let us first consider a legal configuration $(G,x)\in \cL$. We show that the way $c_2$ is defined guarantees that all nodes accept, because $c_2$ correctly pinpoints inconsistencies in $c_1$, witnessing any attempt of $c_1$ to certify that the actual configuration is illegal. Indeed, in Case~0, by the setting of $c_2$, all nodes but the witness accept. Also, the witness itself accepts because it does witness the inconsistency of the $c_1$ certificate. In Case~1, then all nodes accept because $(G,x) \sim (M,\data)$ and $(G,x)\in \cL$. In Case~2, by the setting of $c_2$, all nodes but the witnesses accept, and  the witnesses  accept too because each one  checks that it is the vertex with index~$i$ in $M$. In Case~3, all nodes accept by construction of the certificate~$c_2$. Finally, in Case~4, by the setting of $c_2$, all nodes but the witness accept. Also, the witness itself accepts because, as in Case~0, it does witness the inconsistency of the $c_1$ certificate. So, in all cases, all nodes accept, as desired. 

We are now left with the case of illegal configurations. Let  $(G,x)\notin \cL$ be such an illegal configuration. We set $c_1(u)=(M,\data,\ind(u))$ where $(M,\data)\sim (G,x)$ and $\ind(u)$ is the index of node $u$ in the adjacency matrix $M$ and the array $\data$. We show that, for any certificate $c_2$, at least one node rejects. Indeed, for all nodes to accept, they need to have the same flag in $c_2$. This flag cannot be~1 because, if $f(u)=1$ then $u$ checks the legality of $(M,\data)$. In all other cases, the distance checking should be passed at all nodes for them to accept. Thus, the flag is distinct from~0 and~4 because every radius-1 ball in $(G,x)$ fits with its description in $(M,\data)$. Also, the flag is distinct from~2 because there are no two distinct nodes with the same index~$i$ in the $c_1$ certificate. Finally, also the flag is distinct from~3, because, by the setting of $c_1$, every index in $\{1,\dots,n\}$ appears at some node, and this node would reject. Hence, all cases lead to contradiction, that is, not all nodes can accept, as desired. 
\end{proof}

To conclude the section, let us define a simple problem in $\Pi_2^{\scriptsize \local}\setminus \NLD$. Let $\EXTS$, which stands for ``exactly two selected'' be the following language. We set $(G,x)\in\EXTS$ if $x(u)\in\{\bot,\top\}$ for every $u\in V(G)$, and $|\{u\in V(G): x(u)=\top\}|= 2$. Proving that $\EXTS\notin \NLD$ is easy using the following characterization of $\NLD$. 
Let $t\geq 1$. A configuration $(G',x')$ is a $t$-lift of a configuration $(G,x)$ iff there exists a mapping $\phi:V(G')\to V(G)$ such that, for every $u\in V(G')$, $B_{G}(\phi(u),t)$ is isomorphic to $B_{G'}(u,t)$, preserving inputs. A distributed language $\cL$ is closed under lift if there exists $t\geq 1$ such that, for every $(G,x)$, we have $(G,x)\in \cL$ implies $(G',x')\in \cL$ for every $(G',x')$ that is a $t$-lift of $(G,x)$. 

\begin{lemma}[\cite{FHK12}]\label{lem:lift}
$\NLD$ is the class of distributed languages closed under lift. 
\end{lemma}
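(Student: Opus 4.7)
The plan is to prove both inclusions of the characterization separately: every language in $\NLD$ is closed under lift, and every language closed under lift is in $\NLD$.

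For the forward direction, I start with $\cL \in \NLD$ and let $\cA$ be a $\NLD$-verifier of radius $t$. Suppose $(G,x) \in \cL$ with accepting certificate $c$, and let $(G',x')$ be a $t$-lift of $(G,x)$ via $\phi : V(G') \to V(G)$. I would define a certificate on $G'$ by pulling back along $\phi$, i.e., $c'(u) := c(\phi(u))$. To show $\cA$ accepts $(G',x',c')$ under every identity assignment $\id'$ of $G'$, I fix a node $u \in V(G')$ and use the $t$-ball isomorphism $\psi_u : B_{G'}(u,t) \to B_G(\phi(u), t)$ to transport identities from $B_{G'}(u,t)$ onto $B_G(\phi(u),t)$. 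I then extend this partial identity assignment to a full identity assignment $\id$ of $G$ using arbitrary fresh integers outside the range used, which is possible since identities are just distinct non-negative integers. By construction the $t$-view of $u$ in $(G',x',c',\id')$ is identical (up to $\psi_u$) to the $t$-view of $\phi(u)$ in $(G,x,c,\id)$; since $\cA$ is local of radius $t$ and accepts $\phi(u)$ under $\id$, it accepts $u$ under $\id'$. This yields $(G',x') \in \cL$, so $\cL$ is closed under $t$-lift.

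For the backward direction, assume $\cL$ is closed under $t$-lift. I would design an $\NLD$-verifier of radius $t$ as follows. The certificate at each node $u$ is a triple $c(u) = (H, y, h_u)$ encoding a configuration $(H,y)$ together with a distinguished vertex $h_u \in V(H)$. The verifier at $u$ inspects its $t$-ball and accepts iff: (i) every node $v$ in the $t$-ball carries a certificate with the \emph{same} pair $(H,y)$; (ii) there exists an isomorphism from $B_{G}(u,t)$ to $B_{H}(h_u, t)$ preserving inputs and mapping $u$ to $h_u$ (this is a finite check since the radius is bounded); and (iii) $(H,y) \in \cL$ (which is decidable since $\cL$ is Turing-decidable). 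Completeness is immediate: for $(G,x) \in \cL$, set $c(u) = ((G,x), u)$, and every node's $t$-view trivially passes all checks. For soundness, assume some certificate assignment $c$ on an input $(G',x')$ is accepted everywhere. By (i), connectivity of $G'$ forces a single global pair $(H,y)$; by (iii), $(H,y) \in \cL$; and by (ii), the map $\phi(u) := h_u$ satisfies $B_{G'}(u,t) \cong B_H(\phi(u),t)$ preserving inputs for every $u$. Thus $(G',x')$ is a $t$-lift of $(H,y)$, and closure under lift gives $(G',x') \in \cL$, as required.

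The main subtlety, and what I would check carefully, is the identity-handling in the forward direction: because $\phi$ need not be injective, one has to argue that any local identity pattern on a $t$-ball of $G'$ can be mirrored on the corresponding $t$-ball of $G$ and then extended to a global identity assignment of $G$. This only uses that identities are arbitrary distinct integers with no semantic constraint, so the extension is always possible. The rest is essentially bookkeeping: verifying that conditions (i)--(iii) are locally checkable within radius $t$, and that the existence check in (ii) is purely a finite combinatorial test independent of node identities, which is what makes the scheme an $\NLD$ (not $\LCP$) verifier.
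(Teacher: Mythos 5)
The paper itself does not prove Lemma~\ref{lem:lift}; it is imported from~\cite{FHK12}. Your proof is correct and is essentially the standard argument for this characterization, and both of its halves use techniques that reappear elsewhere in the paper: the forward direction (pull the certificate $c$ back along $\phi$ as $c'(u)=c(\phi(u))$ and transport a local identity pattern from $G'$ to $G$) is exactly the maneuver used in the proofs of Propositions~\ref{prop:sigma2-equal-NLD} and~\ref{prop:unionPiintersectNLD}, and the backward direction (a ``self-describing'' certificate $(H,y,h_u)$ giving a putative source configuration together with a pointer, verified by a local isomorphism check plus a global decidability test) is precisely the mechanism behind the $c_1$ certificate in Proposition~\ref{prop:Pi2isall} and the verifier in Proposition~\ref{prop:MISSLIFTISCOMPLETE}. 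One small point worth being explicit about: in the forward direction you introduce an abstract $t$-ball isomorphism $\psi_u$ and then need $c'(v)=c(\psi_u(v))$ for $v\in B_{G'}(u,t)$, which requires $\psi_u$ to coincide with $\phi$ on the ball; this is harmless provided the intended reading of the lift definition is that $\phi$ itself restricts to the local isomorphism on each $t$-ball (the covering-map interpretation), which is also what the paper implicitly assumes when it performs the same pull-back in Proposition~\ref{prop:sigma2-equal-NLD}. With that reading fixed, your argument is sound, and the soundness of your $\NLD$ verifier correctly yields that the actual configuration is a $t$-lift of the certified $(H,y)\in\cL$, so closure under lift finishes the proof.
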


Since $\EXTS$ is not closed under lift, it results from Lemma~\ref{lem:lift} that $\EXTS\notin \NLD$. 

\section{On the impact of the last universal quantifier}

In this section, we prove the part of Theorem~\ref{theo:main1} related to the two classes $\Pi_1^{\scriptsize \local}$ and $\Sigma_2^{\scriptsize \local}$. These two classes have in common that the universal quantifier is positioned last. It results that these two classes seem to be limited, as witnessed by the following two propositions.

\begin{proposition}\em\label{prop:sigma2-equal-NLD}
 $\Sigma_2^{\scriptsize \local} = \NLD$. 
\end{proposition}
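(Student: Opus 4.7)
The inclusion $\NLD \subseteq \Sigma_2^{\scriptsize \local}$ is immediate: any $\NLD$-algorithm using certificate $c_1$ yields a $\Sigma_2^{\scriptsize \local}$-algorithm that simply ignores its second certificate~$c_2$. So the plan is to establish the reverse inclusion $\Sigma_2^{\scriptsize \local} \subseteq \NLD$. Rather than attempt a direct simulation (which is hard because the verifier must react correctly to \emph{every} possible $c_2$), I will leverage Lemma~\ref{lem:lift}: it suffices to prove that every language in $\Sigma_2^{\scriptsize \local}$ is closed under lift.

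Fix $\cL \in \Sigma_2^{\scriptsize \local}$ witnessed by a local algorithm $\cA$ of radius~$t$. Let $(G,x)\in\cL$ and let $c_1$ be the certificate whose existence is guaranteed by the $\Sigma_2^{\scriptsize \local}$ condition, so that $\cA$ accepts at every node of $G$, for every $c_2$ and every $\id$. Let $(G',x')$ be any $t$-lift of $(G,x)$, witnessed by a map $\phi : V(G') \to V(G)$ such that, for every $u \in V(G')$, the ball $B_{G'}(u,t)$ is isomorphic to $B_G(\phi(u),t)$ while preserving inputs. Define $c_1'$ on $G'$ by $c_1'(u) = c_1(\phi(u))$. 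My goal is to show that this single certificate $c_1'$ witnesses membership $(G',x') \in \cL$, namely that for every $c_2'$ and every $\id'$ on $G'$, algorithm $\cA$ accepts at every node of $G'$.

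The key observation is that $\cA$'s decision at any node depends only on its radius-$t$ view, so it is enough to work node by node. Fix $u \in V(G')$, together with an arbitrary $c_2'$ and $\id'$. Let $\psi_u$ be the isomorphism between $B_{G'}(u,t)$ and $B_G(\phi(u),t)$ provided by the lift. Transport $c_2'$ and $\id'$ along $\psi_u$ to obtain a partial certificate $\tilde c_2$ and a partial identity assignment $\widetilde\id$ on $B_G(\phi(u),t)$, and extend them arbitrarily to the rest of $G$, taking care that the extension of $\widetilde\id$ uses pairwise distinct non-negative integers (which is always possible since $G$ is finite and the integers are infinite). Because $(G,x) \in \cL$ and $c_1$ is the $\Sigma_2^{\scriptsize \local}$ certificate, $\cA(G,x,c_1,\tilde c_2, \widetilde\id, \phi(u)) = \text{accept}$. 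By construction, the radius-$t$ view of $\phi(u)$ under $(c_1,\tilde c_2,\widetilde\id)$ is isomorphic, via $\psi_u^{-1}$, to the radius-$t$ view of $u$ under $(c_1',c_2',\id')$, so $\cA$ also accepts at $u$ in $G'$. Since $u$, $c_2'$, and $\id'$ were arbitrary, this proves $(G',x') \in \cL$ with the $\Sigma_2^{\scriptsize \local}$-certificate $c_1'$ on $G'$, hence $\cL$ is closed under $t$-lifts and thus lies in $\NLD$ by Lemma~\ref{lem:lift}.

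The main conceptual obstacle is psychological: one might fear that a single $c_2'$ on $G'$ has to be simulated by a single $c_2$ on $G$, which would be impossible when $\phi$ is not injective. The proof dissolves this by choosing a \emph{different} $c_2$ (and $\id$) per target node $u$, which is legitimate precisely because $\cA$ is $t$-local and therefore only inspects a radius-$t$ ball. The remaining technical point is merely to ensure that the extension of the transported identity assignment outside $B_G(\phi(u),t)$ produces a valid (injective) $\id$ on $G$, which is trivial.
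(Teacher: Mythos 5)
Your proof is correct and takes essentially the same route as the paper's: reduce to showing closure under $t$-lift via Lemma~\ref{lem:lift}, push $c_1$ forward along $\phi$ to get $c_1'$, and argue that no $c_2'$ (or $\id'$) on $G'$ can cause a rejection. The only difference is that you flesh out the step the paper leaves implicit --- constructing, for each target node $u\in V(G')$ and each $(c_2',\id')$, a pulled-back pair $(\tilde c_2, \widetilde{\id})$ on $G$ matching $u$'s $t$-ball --- which is precisely the subtle point justifying the paper's ``it follows that'' claim when $\phi$ is non-injective.
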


\begin{proof}
By Lemma~\ref{lem:lift}, it is sufficient to prove that, for any $\cL\in \Sigma_2^{\scriptsize \local}$, $\cL$ is closed under lift. If $\cL\in \Sigma_2^{\scriptsize \local}$ then let $\cA$ be a local algorithm establishing the membership of $\cL$ in $\Sigma_2^{\scriptsize \local}$. $\cA$ is satisfying the following. For every  $(G,x)\in \cL$, 
\[
\exists c_1,\forall c_2, \forall \id, \forall u \in V(G),  \cA(G,x,c_1,c_2,\id,u) = \mbox{accept.}
\]
Let $t$ be the radius of $\cA$, and assume, for the purpose of contradiction, that $\cL$ is not closed under lift. There exists $(G,x)\in\cL$, and a $t$-lift $(G',x')$ of $(G,x)$ with $(G',x')\notin \cL$. Let $\phi:V(G')\to V(G)$ be this $t$-lift. Note that the $t$-balls in $(G,x)$ are identical to the $t$-balls in $(G',x')$ by definition of a $t$-lift. Let $c_1$ be the distributed certificate that makes $\cA$ accept $(G,x)$ at all nodes, for all certificates $c_2$. Let $c'_1$ be the distributed certificate for $(G',x')$ defined by $c'_1(u)=c_1(\phi(u))$. Since, with certificate $c_1$, $\cA$ accepts at all nodes of $G$, for every certificate $c_2$, and for every identity assignment, it follows that, with certificate $c'_1$, $\cA$ accepts at all nodes of $G'$, for every certificate $c'_2$, and for every identity assignment, contradicting the correctness of $\cA$. Therefore,  $\cL$ is closed under lift. Thus, $\Sigma_2^{\scriptsize \local} \subseteq \NLD$. Since, by definition, $\NLD=\Sigma_1^{\scriptsize \local} \subseteq \Sigma_2^{\scriptsize \local}$, the result follows. 
\end{proof}

To show that $\Pi_1^{\scriptsize \local} \neq \NLD$, we consider the language $\ALTS$, which stands for ``at least two selected''. (Note that $\ALTS$ is the complement of the language $\AMOS$ introduced in~\cite{FraKorPel13}, where $\AMOS$ stands for ``at most one selected''). We set $(G,x)\in\ALTS$ if $x(u)\in\{\bot,\top\}$ for every node $u\in V(G)$, and $|\{u\in V(G): x(u)=\top\}|\geq 2$. Note that $\ALTS\in \NLD\setminus \Pi_1^{\scriptsize \local}$.

\begin{proposition}\em\label{prop:Pi1subsetNLD}
$\Pi_1^{\scriptsize \local} \subset \NLD$ (the inclusion is strict). 
\end{proposition}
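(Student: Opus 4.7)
The plan is to show $\Pi_1^{\scriptsize \local} \subseteq \NLD$ via Lemma~\ref{lem:lift}, then to exhibit $\ALTS$ as a witness of strict inclusion. For the containment, I would prove that every $\cL \in \Pi_1^{\scriptsize \local}$ is closed under lift. Let $\cA$ be a $\Pi_1^{\scriptsize \local}$-algorithm for $\cL$ of radius $t$, let $(G,x) \in \cL$, and let $(G',x')$ be a $t$-lift of $(G,x)$ via $\phi: V(G') \to V(G)$. I need to show that for every certificate $c'$, every identity assignment $\id'$ on $G'$, and every $u \in V(G')$, the algorithm $\cA$ accepts at $u$. Because $\cA$ has radius $t$, its output at $u$ depends only on the $t$-ball $B_{G'}(u,t)$, which is isomorphic (inputs preserved) to $B_G(\phi(u),t)$ via some bijection $\psi_u$. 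I would transport $c'|_{B_{G'}(u,t)}$ and $\id'|_{B_{G'}(u,t)}$ onto $B_G(\phi(u),t)$ through $\psi_u$, and extend them to all of $G$ by picking the certificate arbitrarily outside the ball and completing the identity assignment with fresh distinct non-negative integers. Call the result $(c,\id)$. Since $(G,x) \in \cL$, the $\forall c, \forall \id$ semantics of $\Pi_1^{\scriptsize \local}$ make $\cA$ accept at every node of $G$ under $(c,\id)$, in particular at $\phi(u)$; and by construction the view of $\cA$ at $\phi(u)$ is isomorphic to its view at $u$ in $(G',x',c',\id')$, so $\cA$ also accepts at $u$.

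For the strict separation, $\ALTS \in \NLD$ follows from Lemma~\ref{lem:lift}: the map underlying any $t$-lift is a surjective, input-preserving graph covering, so two distinct $\top$-labelled vertices of the base have two disjoint nonempty preimages in the lift. To see $\ALTS \notin \Pi_1^{\scriptsize \local}$, suppose for contradiction that $\cA$ of radius $t$ witnesses $\ALTS \in \Pi_1^{\scriptsize \local}$. Let $x^1$ be the input on the path $P_n$, for $n \geq 2t+3$, that sets exactly one endpoint to $\top$. Then $(P_n, x^1) \notin \ALTS$, so the definition yields a certificate $c^1$ such that, fixing any identity assignment $\id_0$, some node $u^1$ satisfies $\cA(P_n, x^1, c^1, \id_0, u^1) = \mathrm{reject}$. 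The ball $B_{P_n}(u^1, t)$ contains at most $2t+1$ vertices, so some $\bot$-labelled node $u'$ lies at distance strictly larger than $t$ from $u^1$. Flipping $u'$ to $\top$ produces $x^2$ with $(P_n, x^2) \in \ALTS$; since $u'$ lies outside $B_{P_n}(u^1, t)$ and we reuse $c^1, \id_0$, the view of $\cA$ at $u^1$ is unchanged, and hence $\cA(P_n, x^2, c^1, \id_0, u^1)$ still rejects, contradicting the acceptance requirement on legal configurations.

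The main obstacle I anticipate is the containment step. The subtlety is that $\phi$ is many-to-one, so one cannot in general pull back an arbitrary $c'$ on $G'$ to a globally coherent $c$ on $G$. The trick is to argue locally, one node $u \in V(G')$ at a time: on the $t$-ball around $u$ the restriction $\psi_u$ is a bijection, and the $\forall c, \forall \id$ quantifier in the definition of $\Pi_1^{\scriptsize \local}$ lets us tailor a different $(c,\id)$ on $G$ for each $u$, absorbing the different extensions that would otherwise clash. The strictness part is a routine indistinguishability argument; its only real check is that a sufficiently distant $\bot$-labelled node is always available, which is ensured by taking $n \geq 2t+3$.
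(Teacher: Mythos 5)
Your proposal is correct and follows essentially the same approach as the paper: the containment $\Pi_1^{\scriptsize \local} \subseteq \NLD$ is proved by showing closure under lift and invoking Lemma~\ref{lem:lift} (the same ball-isomorphism transport the paper attributes to the proof of Proposition~\ref{prop:sigma2-equal-NLD}), and the strict separation is witnessed by $\ALTS$. The only difference is in the separation gadget---the paper considers three equally-spaced nodes on a cycle with a fixed empty certificate, while you take a path, fix the rejecting certificate guaranteed by the $\Pi_1^{\scriptsize \local}$ semantics for the one-selected instance, and flip a $\bot$ node outside the rejecting node's $t$-ball---but both are the same one-step indistinguishability argument, and your version is if anything more explicit than the paper's rather terse treatment.
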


\begin{proof}
By Lemma~\ref{lem:lift}, to establish  $\Pi_1^{\scriptsize \local} \subseteq \NLD$, it is sufficient to prove that, for any $\cL\in \Pi_1^{\scriptsize \local}$, $\cL$ is closed under lift. The arguments are exactly similar to the ones used in the proof of Proposition~\ref{prop:sigma2-equal-NLD} without even the need to lift a first set of certificates. To show that $\Pi_1^{\scriptsize \local} \neq \NLD$, we consider the language $\ALTS$. We have $\ALTS\in \NLD$ because $\ALTS$ is closed under lift. However, $\ALTS\notin \Pi_1^{\scriptsize \local}$. Indeed, assume that there exists a local algorithm $\cA$ for $\ALTS\in\Pi_1^{\scriptsize \local}$. Then, consider the cycle $C_n$, and three distinct nodes $u,v,w$ in $C_n$ equally spread at distance $n/3$. Let us fix a set of certificates, say $c(u)=\emptyset$ for every node $u$. $\cA$ cannot be correct for all eight configurations resulting from the two possible input assignments $\bot$ or $\top$ to the three nodes $u,v,w$. 
\end{proof}

While $\Pi_1^{\scriptsize \local}$ is in $\NLD$, the universal quantifier does add some power compared to $\LD$. We show that $\LD \neq \Pi_1^{\scriptsize \local}$ by exhibiting a language in $\Pi_1^{\scriptsize \local}\setminus \LD$. Note that the existence of this language is not straightforward as it must involve Turing-computability issues. Indeed, if one does not insist on the fact that the local algorithm must be a Turing-computable function, then the two classes $\LD$ and $\Pi_1^{\scriptsize \local}$ would be identical. For instance, given a $t$-round algorithm $\cA$ deciding  a language $\cL$ in $\Pi_1^{\scriptsize \local}$, one could define the following function $f$ for deciding the same  language in $\LD$. Given a $t$-ball $B$ centered at $u$, node $u$ accepts with $f$ if and only if there are no certificate assignments to the nodes of $B$ that could lead $\cA$ to reject at $u$. The function $f$ is however not an algorithm. We show that, in fact,  $\Pi_1^{\scriptsize \local}\setminus \LD\neq \emptyset$.

\begin{proposition}\em
 $\LD \subset \Pi_1^{\scriptsize \local}$  where the inclusion is strict.
\end{proposition}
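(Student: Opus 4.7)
The inclusion $\LD \subseteq \Pi_1^{\scriptsize \local}$ is immediate: any $\LD$-algorithm for a language is already a $\Pi_1^{\scriptsize \local}$-algorithm that ignores its certificate, vacuously satisfying the universal quantifier. For the strict separation, the plan is to exhibit a Turing-decidable distributed language $\cL_\star$ whose natural local decider---the function $f$ described just before the statement, which accepts at $u$ iff no certificate on $B(u,t)$ causes $\cA$ to reject---is \emph{not} Turing-computable, even though $\cL_\star$ is globally Turing-decidable and admits a Turing-computable $\Pi_1^{\scriptsize \local}$-verifier.

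Concretely, I would encode a halting-problem flavour into the configurations on long paths. Let $\cL_\star$ consist, in its non-trivial part, of configurations $(P_n,x_M)$ where every node carries the same input $M$ encoding a Turing machine, with $(P_n,x_M)\in \cL_\star$ iff $M$ does not halt on the empty tape within $n-1$ steps; all configurations of a different shape are placed in $\cL_\star$ by default. Global Turing-decidability is then easy: read $M$ and $n$, simulate for $n-1$ steps. For $\cL_\star\in \Pi_1^{\scriptsize \local}$, the certificate $c(v_i)$ is designed to carry the $i$-th configuration of $M$'s computation along with a position index and a cumulative ``consistent-so-far'' bit; the verifier $\cA$ accepts at every interior node no matter the cert, and rejects at the terminal node precisely when the cumulative bit is set and the cert's state at that node is the halting configuration of~$M$. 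The interior consistency checks are arranged so that only a cert encoding a genuine anchored halting trace can carry the ``consistent-so-far'' bit all the way to the terminal node, so ``$\forall c$, no node rejects'' coincides exactly with non-halting of $M$ within $n-1$ steps. For $\cL_\star\notin \LD$, I would invoke the standard indistinguishability argument on $t$-balls: for any $t$-round local Turing-computable algorithm and any machine $M$ halting in exactly $T\geq 2t+2$ steps, the configurations $(P_T,x_M)$ and $(P_{T+1},x_M)$ have opposite membership in $\cL_\star$ but every $t$-ball (interior, left-endpoint, right-endpoint) occurs identically in both, hence the algorithm gives identical decisions on corresponding nodes, a contradiction.

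The delicate step I expect to be the main obstacle is the engineering of the ``consistent-so-far'' bit in the verifier so that its universal-quantifier semantics are faithful: the bit must be locally verifiable at each interior node in a way that forbids certificates from setting it spuriously in $\cL_\star$-configurations, yet not so strictly as to cause an innocent node to reject on malformed certs. As the paragraph preceding the statement flags, this separation is fundamentally enabled by Turing-computability: without it, the function $f$ would constitute an $\LD$ procedure for any $\Pi_1^{\scriptsize \local}$-language and collapse the two classes; with it, $f$ here encodes a halting-style predicate that cannot be computed by any local Turing algorithm, placing $\cL_\star$ in $\Pi_1^{\scriptsize \local}\setminus\LD$.
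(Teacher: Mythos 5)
Your plan has a genuine gap: the language $\cL_\star$ you define is not in $\Pi_1^{\scriptsize \local}$, so no amount of engineering of the ``consistent-so-far'' bit can make the verifier work. The same indistinguishability argument you deploy to prove $\cL_\star \notin \LD$ also proves $\cL_\star \notin \Pi_1^{\scriptsize \local}$. Indeed, take $M$ halting in exactly $T \geq 2t+2$ steps, so $(P_T,x_M)\in \cL_\star$ and $(P_{T+1},x_M)\notin \cL_\star$. If $\cL_\star\in\Pi_1^{\scriptsize \local}$ via a $t$-round verifier $\cA$, then there is a certificate $c^*$ on $P_{T+1}$ such that (for every identity assignment) some node $v_j$ rejects. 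But $v_j$'s $t$-ball in $(P_{T+1},x_M,c^*)$---the sub-path structure, the constant input $M$, and the certificate values restricted to that ball, together with a suitable identity assignment---occurs verbatim as the $t$-ball of a node of $P_T$ (after, if $v_j$ is near the right endpoint, shifting identities by one). Extending those certificate values arbitrarily to the rest of $P_T$ gives a certificate on which that node of $P_T$ rejects, contradicting that $(P_T,x_M)$ is legal and must be accepted under \emph{every} certificate. This is the concrete reason your verifier design cannot be repaired: the $\Pi_1^{\scriptsize \local}$ semantics forces every node to accept under every certificate on legal instances, so the terminal node cannot distinguish a genuine halting trace from a faked one living only in its $t$-ball, and interior nodes cannot help by rejecting on inconsistency (that would also break the legal case).

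The paper's construction of $\ITER$ is engineered precisely to dodge this obstruction, and it is worth seeing how. First, the computation trace is placed in the \emph{input} $x$ (the values $f_{i,L},f_{i,R}$), not in the certificate; its local consistency, together with the tree-based rigidification, is an $\LD$ condition ($\ITER^-$). Second, the $\Pi_1^{\scriptsize \local}$-only part of the condition is a disjunction $f_{|L|,L}=0$ or $f_{|R|,R}=0$ whose two witnesses live at the two far extremities: each extremity sees its own bit but not the other's, so no local rejection rule at an extremity can be correct. Third, the certificate is a single integer $k$ read at the pivot, and the pivot \emph{recomputes} $f_M^{(k)}(a)$ and $f_M^{(k)}(b)$ itself from its own inputs $(M,a,b)$---the certificate carries no structural claim that could be forged. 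Using that $0$ and $1$ are fixed points of $f_M$, the pivot accepts under every $k$ on legal instances and rejects under $k=\max\{|L|,|R|\}$ on illegal ones. Finally, $\ITER\notin\LD$ is shown not by $t$-ball indistinguishability (the balls near the pivot \emph{do} encode the answer through $(M,a,b)$) but by Turing undecidability: a $t$-round decider near the pivot would decide an instance of the halting problem. Your proposal misses both the need to put the trace in the input and the need for a two-sided disjunction whose resolution is undecidable rather than merely non-local.
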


\begin{proof}
We describe the distributed language $\ITER$, which stands for ``iteration''. Let $M$ be a Turing machine, and let us enumerate lexicographically all the states of the system tape-machine where $M$ starts its execution on the blank tape, with the head at the beginning of the tape. We define the function $f_M:\mathbb{N}\to\mathbb{N}$ by $f_M(0)=0$, $f_M(1)=1$, and, for $i>1$, $f_M(i)$ equal to the index of the system state after one step of $M$ from system state $i$. We define $\ITER$ as the collection of configurations $(G,x)$ representing two sequences of iterations of a function $f_M$ on different inputs $a$ and $b$ (see Figure~\ref{fig:iter}). 

\begin{figure}[htb]
\centerline{\includegraphics[scale=1.3]{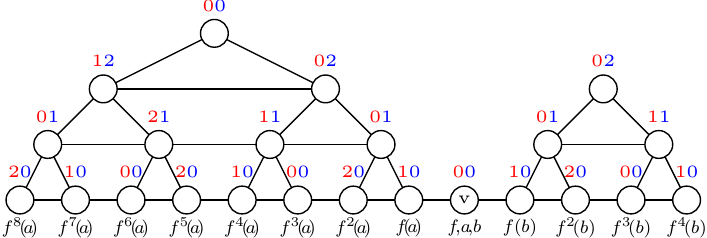}}
\caption{\sl An illustration of the distributed language $\ITER$}
\label{fig:iter}
\end{figure}

More precisely, let $M$ be a Turing machine, and let $a$ and $b$ be two non-negative integers. We define the following family of configurations --- cf. Figure~\ref{fig:iter}. A configuration in $\ITER$ mainly consists in a path $P$ with a special node $v$, called the \emph{pivot}, identified in this path. So $P=LvR$ where $L$ and $R$ are subpaths, respectively called left path and right path. All nodes of the path are given the machine $M$ as input, and the pivot $v$ is also given $a$ and $b$ as inputs. The node of the left path (resp., right path) at distance $i$ from $v$ is given a value $f_{i,L}$ (resp., $f_{i,R}$) as input. To be in the language, it is required that, for every $i$, 
$
f_{i,L}=f_M^{(i)}(a) \; \mbox{and} \; f_{i,R}=f_M^{(i)}(b), 
$
where $g^{(i)}$ denotes the $i$th iterated of a function $g$. Let $u_\ell$ and $u_r$ be the two nodes at the extremity of the left path and of the right path, respectively. The configuration is in the language if and only if the $f$-values at both extremities of the path $P$ are~0 or~1, and at least one of them is equal to~0. That is, the configuration is in the language if and only if:
\begin{equation}\label{eq:itercond}
(f_{|L|,L}\in\{0,1\} \; \mbox{and} \; f_{|R|,R}\in\{0,1\}) \;\mbox{and} \; (f_{|L|,L}=0 \; \mbox{or} \; f_{|R|,R}=0).
\end{equation} 
In fact, for technical reasons, it is also required that both $|L|$ and $|R|$ are powers of~2. Indeed, on top of $L$ and $R$ are two complete binary trees $T_L$ and $T_R$, respectively, with horizontal paths connecting nodes of the same depth in each tree (see Figure~\ref{fig:iter}). The nodes of $L$ and $R$ are the leaves of these two trees. Finally, every node $u$ of the  graph receives as input a pair of labels $(\ell_1,\ell_2)\in\{0,1,2\}^2$. The label $\ell_1$ is the distance modulo~3 from $u$ to the right-most node (resp., left-most node) of the path if $u$ is an internal  node of $T_L$ (resp., $T_R$), and, for nodes in the path $P$, $\ell_1$ is simply the distance modulo~3 from the pivot $v$. The label $\ell_2$ is the height of the node in its tree modulo~3. (The pivot, which belongs to none of the trees, has height~0). 

A configuration $(G,x)\in \ITER$ if and only if $(G,x)$ satisfies all the above conditions with respect to the given machine~$M$. 
Let us consider a weaker version of $\ITER$, denoted by $\ITER^-$ where the condition of Eq.~\eqref{eq:itercond} is replaced by just: 
$
f_{|L|,L}\in\{0,1\} \; \mbox{and} \; f_{|R|,R}\in\{0,1\}. 
$
Thanks to the labeling $(\ell_1,\ell_2)$ at each node, which ``rigidifies'' the structure, we have $\ITER^-\in \LD$ using the same arguments as the ones in~\cite{FGKS13}.  
Moreover, $\ITER\in \Pi_1^{\scriptsize \local}$. To see why, we describe a local algorithm $\cA$ using certificates. The algorithm first checks whether $(G,x)\in \ITER^-$. All nodes, but the pivot $v$, decide according to this checking. If the pivot rejected $(G,x)\in \ITER^-$, then it rejects in $\cA$ as well. Otherwise, it carries on its decision process by interpreting its certificate as a non-negative integer $k$, and accepts in $\cA$ unless $f_M^{(k)}(a)=1$ and $f_M^{(k)}(b)=1$. To show the correctness of $\cA$, let $(G,x)\in\ITER$. We have $f_{|L|,L}=0$ or $f_{|R|,R}=0$, i.e., $f_M^{(|L|)}(a)=0$ or $f_M^{(|R|)}(b)=0$. W.l.o.g., assume $f_M^{(|L|)}(a)=0$. If $k\geq |L|$ then $f_M^{(k)}(a)=0$ since $f_M(0)=0$, and thus $v$ accepts. If $k < |L|$ then $f_M^{(k)}(a)\neq 1$ since $f_M(1)=1$, and thus $v$ accepts. Therefore, all certificates lead to acceptance. Let us now consider $(G,x)\notin\ITER$. If $(G,x)\notin\ITER^-$ then at least one node rejects, independently of the certificate. So, we assume that  $(G,x)\in\ITER^-\setminus\ITER$. Thus, $f_M^{(|L|)}(a)=1$ and $f_M^{(|R|)}(b)=1$. The certificate is set to $k=\max\{|L|,|R|\}$. Let us assume, w.l.o.g., that $k=|L|\geq |R|$.  By this setting, we have $f_M^{(k)}(a)=1$. Moreover, since $k\geq |R|$, and since $f_M(1)=1$, we get that $f_M^{(k)}(b)=1$. Therefore, $\cA$ rejects, as desired. Thus, $\ITER\in\Pi_1^{\scriptsize \local}$. 

It remains to show that $\ITER\notin\LD$. Let us assume, for the purpose of contradiction, that there exists a $t$-round algorithm  $\cA$ deciding $\ITER$. Since $\ITER^-\in\LD$, this algorithm is able to distinguish an instance with $f_M^{(|L|)}(a)=1$ and $f_M^{(|R|)}(b)=1$ from instances in which $f_M^{(|L|)}(a)\neq 1$ or $f_M^{(|R|)}(b)\neq 1$. Observe that a node at distance greater than $t$ from the pivot can gather  information related to only one of the two inputs $a$ and $b$. Therefore, the distinction between the case $f_M^{(|L|)}(a)=1$ and $f_M^{(|R|)}(b)=1$ and the case $f_M^{(|L|)}(a)\neq 1$ or $f_M^{(|R|)}(b)\neq 1$ can only be made by a node at distance at most $t$ from the pivot. Therefore, by simulating $\cA$ at all nodes in the ball of radius $t$ around $v$, with identities between~1 and the size of the ball of radius $2t$ around the pivot, a sequential algorithm can determine, given a Turing machine $M$, and given $a$ and $b$, whether  there exist $\ell$ and $r$ such that $f_M^{(\ell)}(a)=f_M^{(r)}(b)=1$ or not, which is actually Turing undecidable. This contradiction implies that, indeed,  $\ITER\notin\LD$. 
\end{proof}

\section{Complement classes}

Given a class $\cC$ of distributed languages, the class co-$\cC$ is composed of all distributed languages $\cL$ such that $\bar{\cL}\in\cC$, where $\bar{\cL}=\{(G,x)\notin \cL\}$. For instance, 
%
co-$\Pi_1^{\scriptsize \local}$ is the class of languages $\cL$ for which there exists a local algorithm $\cA$ such that, for every configuration $(G,x)$, 
\[
\begin{array}{lcl}
(G,x)\in \cL & \Rightarrow & \exists c, \forall \id, \exists u \in V(G),  \cA(G,x,c,\id,u) = \mbox{accepts;}\\
(G,x)\notin \cL & \Rightarrow & \forall c, \forall \id, \forall u \in V(G),  \cA(G,x,c,\id,u) = \mbox{rejects.}\\
\end{array}
\]
Note in particular, that the rejection must now be unanimous, while the acceptance requires only one node to accept. Let us define the following two languages: 
each input to every node belongs to $\{\mbox{true},\mbox{false}\}$, and a configuration is in $\AND$ (resp., in $\OR$) if and only if the logical conjunction (resp., disjunction) of the inputs is true. 
%
We have $\overline{\AND}=\OR$. These two languages enable to separate $\LD$ from its co-class. Indeed, trivially, $\OR \notin \LD$, and $\AND\in\LD$ (without communication).  The class $\LD\cap\coLD$ is almost trivial. It contains simple distributed languages such as $\DIAM_k$, the class of graphs with diameter at most $k$, for any fixed $k$. We have the following separation. 

\begin{proposition}\em
$\OR \in \coLD \setminus \Pi_1^{\scriptsize \local}$, and $\AND \in \LD\setminus \mbox{\rm co-}\Pi_1^{\scriptsize \local}$. 
\end{proposition}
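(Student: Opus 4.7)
The membership assertions $\OR\in\coLD$ and $\AND\in\LD$ have already been observed in the paragraph preceding the proposition: $\AND$ is decided without any communication by having each node accept iff its own input is true, and $\OR=\overline{\AND}$ then places $\OR$ in $\coLD$. The substance of the proposition therefore lies in the two non-membership claims, and my plan is to handle both by an input-perturbation-at-distance argument, and to note that the $\AND$ claim can alternatively be obtained by duality from the $\OR$ claim.

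For $\OR\notin\Pi_1^{\scriptsize \local}$, suppose for contradiction that a local algorithm $\cA$ of radius $t$ witnesses $\OR\in\Pi_1^{\scriptsize \local}$. Take a cycle (or path) $G$ on $n\geq 4t+4$ nodes with input $\bot$ everywhere; call this configuration $(G,x)$. Since $(G,x)\notin\OR$, the $\Pi_1^{\scriptsize \local}$ definition guarantees a certificate assignment $c$ and an identity assignment $\id$ under which some node $v$ rejects. Now pick any node $w$ with $\dist(v,w)>t$ (which exists by the size of $G$) and define $(G,x')$ by flipping $x(w)$ from $\bot$ to $\top$, while keeping the same graph $G$, the same $\id$, and the same certificate $c$. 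Then $(G,x')\in\OR$, so by definition all nodes must accept for every certificate, in particular for $c$. But $v$'s radius-$t$ view (graph structure, inputs, identities, certificates) is identical in $(G,x)$ and $(G,x')$, hence $v$ still rejects, a contradiction.

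For $\AND\notin\coPi_1^{\scriptsize \local}$, the cleanest route is duality: if $\AND\in\coPi_1^{\scriptsize \local}$ then by definition of complement classes $\overline{\AND}=\OR\in\Pi_1^{\scriptsize \local}$, contradicting what has just been established. A direct argument dual to the one above is equally available: start from a large all-$\top$ configuration (which lies in $\AND$), take the certificate $c$ and the node $u$ guaranteed to accept by the $\coPi_1^{\scriptsize \local}$ definition, flip the input at some $w$ with $\dist(u,w)>t$ to $\bot$ (leaving $G$, $\id$, and $c$ untouched), and observe that $u$'s radius-$t$ view is preserved so $u$ still accepts, violating the unanimous-rejection requirement on the now-illegal instance.

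The key conceptual point—and in my view the main obstacle—is that the closure-under-lift toolkit used earlier (Lemma~\ref{lem:lift} and Proposition~\ref{prop:Pi1subsetNLD}) cannot give this separation: $\OR$ is in fact closed under lift and therefore sits comfortably inside $\NLD$. One must instead exploit the intrinsic asymmetry of $\Pi_1^{\scriptsize \local}$, namely that a single universally quantified certificate must simultaneously survive every local perturbation of the inputs at far-away nodes, which is precisely what the indistinguishability of $v$'s view at distance $>t$ breaks.
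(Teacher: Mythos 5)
Your proof is correct. The paper states this proposition without an explicit proof, treating it as a consequence of the remarks just above ($\AND\in\LD$ without communication, $\overline{\AND}=\OR$), and your argument is exactly the natural one and mirrors the technique the paper does spell out for $\ALTS\notin\Pi_1^{\scriptsize \local}$ in Proposition~\ref{prop:Pi1subsetNLD}: on the all-$\bot$ configuration, take the certificate and identity assignment under which some node $v$ rejects, flip an input at a node beyond distance~$t$ from~$v$, and conclude by indistinguishability of $v$'s radius-$t$ view that a legal configuration is rejected, contradicting the universal acceptance requirement on legal instances. The dual argument (or the complementation shortcut via $\overline{\AND}=\OR$) handles $\AND\notin\mbox{co-}\Pi_1^{\scriptsize \local}$, and your side remark that the lift-closure lemma cannot give this separation, because $\OR$ is closed under lift and therefore lies in $\NLD$, is accurate.
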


Similarly, the languages $\ALTS$ and $\AMOS$ introduced in the proof of Proposition~\ref{prop:Pi1subsetNLD} enable to separate $\NLD$ from its co-class. Indeed, $\ALTS=\overline{\AMOS}$, $\ALTS$ is closed under lift, and $\AMOS$ is not closed under lift. Moreover, consider the language $\EXTS$ defined at the end of Section~\ref{sec:allinPi2}. Both $\EXTS$ and $\overline{\EXTS}$ are not closed under lift. So, overall, by Lemma~\ref{lem:lift}, we get: 

\begin{proposition}\em
 $\ALTS \in \NLD \setminus \coNLD$, $\AMOS \in \coNLD \setminus \NLD$, and $\EXTS\notin \NLD\cup\coNLD$. 
\end{proposition}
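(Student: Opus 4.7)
The plan is to invoke Lemma~\ref{lem:lift} throughout: by that lemma, $\cL\in\NLD$ iff $\cL$ is closed under $t$-lifts for some $t\geq 1$, and $\cL\in\coNLD$ iff $\overline{\cL}$ is. So the proposition reduces to four statements about closure under lifts, which I can handle with a single family of examples.

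First I would verify $\ALTS\in\NLD$ by showing $\ALTS$ is closed under every $t$-lift. Given a $t$-lift $\phi:V(G')\to V(G)$ of $(G,x)$, the $0$-ball isomorphism forces $x'(u)=x(\phi(u))$ at every $u\in V(G')$, and $\phi$ is surjective as a covering of the connected graph $G$. Hence distinct $\top$-nodes of $G$ have disjoint, nonempty $\top$-preimages in $G'$, and $(G,x)\in\ALTS$ implies $(G',x')\in\ALTS$. This also gives $\AMOS=\overline{\ALTS}\in\coNLD$ for free.

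For the three non-memberships ($\AMOS\notin\NLD$, $\EXTS\notin\NLD$, $\EXTS\notin\coNLD$), I would use a single uniform construction: for any $n\geq 2t+1$ and any $k\geq 1$, the cycle $C_{kn}$ is a $t$-lift of $C_n$ via the map $i\mapsto i\bmod n$, since every $t$-ball in a cycle of length at least $2t+1$ is a path of length $2t$, and inputs are lifted along $\phi$. Starting from $C_n$ with exactly one $\top$-node and applying the $2$-lift yields $C_{2n}$ with two $\top$-nodes, which simultaneously (i) exits $\AMOS$, showing $\AMOS$ is not closed under lifts, and (ii) exits $\overline{\EXTS}$, showing $\overline{\EXTS}$ is not closed under lifts, i.e.\ $\EXTS\notin\coNLD$. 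Starting from $C_n$ with exactly two $\top$-nodes and applying the $2$-lift yields $C_{2n}$ with four $\top$-nodes, which exits $\EXTS$, showing $\EXTS\notin\NLD$.

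The main obstacle, if any, is merely verifying that the cycle-based construction is a bona fide $t$-lift of a \emph{configuration} in the sense of the paper: $C_{kn}$ must be connected (immediate) and its $t$-balls must be genuinely isomorphic to those of $C_n$ while preserving inputs (which holds once $n\geq 2t+1$, by lifting labels along $\phi$). Since $n$ can be chosen arbitrarily large, the same family of examples works uniformly in the radius $t$ that any hypothetical verifier might use, yielding all three non-closure results at once.
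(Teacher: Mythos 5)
Your proof is correct and takes essentially the same route as the paper: reduce all five membership/non-membership claims to closure under $t$-lift via Lemma~\ref{lem:lift}, and observe that lifting preserves the ``at least two selected'' property while it can break ``at most one'' and ``exactly two.'' The paper states these closure facts without exhibiting witnesses; your explicit cycle construction ($C_{2n}$ as a $t$-lift of $C_n$ for $n\geq 2t+1$, doubling the number of $\top$-nodes) fills in the non-closure claims concretely and uniformly in $t$, which is a welcome addition but not a different argument.
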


More interesting is the position of the $\Pi_1^{\scriptsize \local}$ w.r.t. $\NLD$ and $\coNLD$: 

\begin{proposition}\label{prop:unionPiintersectNLD}\em
$\Pi_1^{\scriptsize \local} \cup \mbox{\rm co-}\Pi_1^{\scriptsize \local} \subset \NLD \cap \coNLD$, where the inclusion is strict.
\end{proposition}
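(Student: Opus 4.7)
The plan splits the statement into two tasks: establishing the inclusion $\Pi_1^{\scriptsize \local}\cup\mbox{\rm co-}\Pi_1^{\scriptsize \local}\subseteq\NLD\cap\coNLD$, and exhibiting a language that strictly separates the two sides. Proposition~\ref{prop:Pi1subsetNLD} already yields $\Pi_1^{\scriptsize \local}\subseteq\NLD$, and hence $\mbox{\rm co-}\Pi_1^{\scriptsize \local}\subseteq\coNLD$ by complementation, so the only remaining ingredient of the inclusion is $\Pi_1^{\scriptsize \local}\subseteq\coNLD$, i.e., $\bar\cL\in\NLD$ whenever $\cL\in\Pi_1^{\scriptsize \local}$.

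For that inclusion I would use Lemma~\ref{lem:lift} and show that $\bar\cL$ is closed under lift. Fix a $\Pi_1^{\scriptsize \local}$-algorithm $\cA$ of radius $t$ for $\cL$, and a $t$-lift $(G',x')$ of $(G,x)$ via $\phi:V(G')\to V(G)$; I would argue the contrapositive, that $(G',x')\in\cL$ forces $(G,x)\in\cL$. For arbitrary $c$, $\id$, and $u\in V(G)$, pick a preimage $u'\in\phi^{-1}(u)$ (valid because the local bijectivity implicit in a $t$-lift, together with the connectedness of $G$, forces $\phi$ to be surjective), define $c'=c\circ\phi$ on $V(G')$, set $\id'=\id\circ\phi$ on $B_{G'}(u',t)$, and extend $\id'$ outside that ball by fresh distinct integers. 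The $t$-ball views at $u$ and $u'$ then coincide as labeled structures, so $\cA(G,x,c,\id,u)=\cA(G',x',c',\id',u')=\mathrm{accept}$ by the hypothesis $(G',x')\in\cL$. Since $c$, $\id$, $u$ were arbitrary, $(G,x)\in\cL$; complementing the argument gives $\mbox{\rm co-}\Pi_1^{\scriptsize \local}\subseteq\NLD\cap\coNLD$ as well.

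The strict separation will come from $\TREE$. Membership $\TREE\in\NLD$ is recorded in the Introduction, and $\TREE\in\coNLD$ follows from $\overline{\TREE}\in\NLD$, which I would certify by the standard cycle witness: each node is either marked on a distinguished cycle (with non-backtracking successor and predecessor pointers, checked locally via $s(s(u))\neq u$ to force every orbit to have length at least $3$), or else carries a strictly decreasing distance pointer towards that cycle; in a tree, the orbit condition forces the on-cycle set to be empty, after which the distance chains cannot terminate and some node rejects. For $\TREE\notin\Pi_1^{\scriptsize \local}$, assuming a $\Pi_1^{\scriptsize \local}$-algorithm $\cA$ of radius $t$ and taking $C_n\notin\TREE$ with $n>2t+1$, the rejection clause provides a certificate $c^*$ and, for some identity assignment $\id$, a rejecting node $u^*$ whose $t$-ball is a path on $2t+1$ vertices. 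Copying this view to an interior node $v^*$ of $P_n\in\TREE$ and extending the certificate and identities on the rest of $P_n$ with fresh distinct integers forces $\cA$ to reject at $v^*$, contradicting acceptance of $P_n$ for every certificate.

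The companion claim $\TREE\notin\mbox{\rm co-}\Pi_1^{\scriptsize \local}$, i.e., $\overline{\TREE}\notin\Pi_1^{\scriptsize \local}$, is where I expect the main obstacle. The same transport with $P_n$ as the rejected instance and a plain cycle as the accepting instance breaks if the rejecting node in $P_n$ sits within distance $t$ of a leaf, because such a leaf-containing $t$-ball has no counterpart in any plain cycle. I would overcome this by using as cyclic witness a graph $G$ obtained from a long cycle $C_m$ by attaching a pendant path of length $k>2t$; its far end and the nodes near it reproduce every leaf and near-leaf $t$-ball of $P_n$, while the high-degree attachment point stays outside the $t$-ball being transported. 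Rejection at the corresponding node of $G$ then contradicts $G\in\overline{\TREE}$, completing the separation.
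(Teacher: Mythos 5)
Your proposal is correct and follows essentially the same route as the paper's proof: a lift-transfer argument (via Lemma~\ref{lem:lift}) to establish $\Pi_1^{\scriptsize \local}\cup\mbox{\rm co-}\Pi_1^{\scriptsize \local}\subseteq\NLD\cap\coNLD$, and path/cycle indistinguishability arguments to place $\TREE$ in $(\NLD\cap\coNLD)\setminus(\Pi_1^{\scriptsize \local}\cup\mbox{\rm co-}\Pi_1^{\scriptsize \local})$. The only cosmetic differences are that you prove $\Pi_1^{\scriptsize \local}\subseteq\coNLD$ directly rather than the equivalent $\mbox{\rm co-}\Pi_1^{\scriptsize \local}\subseteq\NLD$, you exhibit an explicit $\NLD$ certificate for $\overline{\TREE}$ where the paper simply invokes closure under lift, and your gadget for $\TREE\notin\mbox{\rm co-}\Pi_1^{\scriptsize \local}$ (a cycle with one pendant path) differs slightly from the paper's (two paths glued at antipodal cycle nodes) while resting on the identical transport idea.
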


\begin{proof}
From Proposition~\ref{prop:Pi1subsetNLD}, we know that $\Pi_1^{\scriptsize \local} \subset \NLD$. We prove that $\mbox{\rm co-}\Pi_1^{\scriptsize \local} \subset \NLD$. Let $\cL\in \mbox{\rm co-}\Pi_1^{\scriptsize \local}$, and let $\cA$ be a $t$-round algorithm deciding $\cL$. Let $(G,x)\in \cL$, and let $c$ be a certificate such that $\cA$ accepts at all nodes. Let $(G',x')$ be a $t$-lift of $(G,x)$, and lift $c$ into $c'$ accordingly. Then $\cA$ also accepts $(G',x')$, which implies that $\cL$ is closed under $t$-lift, and thus, by Lemma~\ref{lem:lift}, $\cL\in \NLD$. Therefore
$
\Pi_1^{\scriptsize \local} \cup \mbox{\rm co-}\Pi_1^{\scriptsize \local} \subseteq \NLD \cap \coNLD.
$
To prove that the inclusion is strict, we consider the language $\TREE=\{(G,x):G\;\mbox{is a tree}\}$. We have $\TREE\in\NLD$ since a tree cannot be lifted. We also have $\TREE\in\coNLD$ since a tree cannot result from a lift. (By Lemma~\ref{lem:lift}, $\coNLD$ is the class of languages $\cL$ closed down under lift, i.e., if $(G,x)\in\cL$ is the lift of a configuration $(G',x')$, then we have $(G',x')\in \cL$). To see why $\TREE\notin\Pi_1^{\scriptsize \local}$, consider a path and a cycle. If $\TREE$ could be decided in $\Pi_1^{\scriptsize \local}$, then the center nodes of the path must accept for all certificates and for any identity-assignment. Hence, all degree-2 nodes that see only degree-2 nodes in their neighborhoods accept, for all certificates. As a consequence, the cycle will be incorrectly accepted for all certificates. Somewhat similarly,  if $\TREE$ could be decided in co-$\Pi_1^{\scriptsize \local}$, say in $t$-rounds, then it would mean that, in a path, the node(s) that accept(s) (with the appropriate certificate) can only be at distance at most $t$ from an extremity of the path. Indeed, otherwise, one could close the path and create a cycle that will still be accepted. So, by gluing two paths $P$ and $P'$ of length at least $2t$ to two antipodal nodes of a cycle $C$, and by giving to the nodes of $P$ and $P'$ the certificates that lead each of them to be accepted, this graph would be incorrectly accepted.  
\end{proof}

\section{Complete problems}
 
In this section, we prove Theorem~\ref{theo:main2}.  Let $G$ be a connected graph, and $U$ be a set (typically, $U=\{0,1\}^*$). Let $e: V(G)\to U$, and let $\cS:V(G) \to 2^{2^U}$. That is, $e$ assigns  an element $e(u)\in U$ to every node $u\in V(G)$, and $\cS$ assigns a collection of sets $\cS(u)=\{S_1(u),\dots,S_{k_u}(u)\}$ to every node $u\in V(G)$, with $k_u\geq 1$ and $S_i:V(G)\to 2^U$ for every $i\geq 1$. We say that $\cS$ covers $e$ if and only if there exists $u\in V(G)$, and there exists $i\in\{1,\dots,k_u\}$, such that $S_i(u)=\{e(u), u\in V(G)\}$. In~\cite{FraKorPel13}, the authors defined the language 
\[
\COVER=\{(G,x): \forall u\in V(G), x(u)=(\cS(u),e(u)) \;\mbox{such that $\cS$ covers $e$}\}
\]
and proved that $\COVER$ is the ``most difficult problem'', in the sense that every distributed language can be locally reduced to $\COVER$. However $\COVER$ is closed under lift as lifting does not create new elements, while lifting preserves the sets. Therefore, by Lemma~\ref{lem:lift},  $\COVER\in\NLD$. In fact, one can show that there exists a local verification algorithm for $\COVER$ using certificates of size quasi linear in $n$ whenever the ground set $U$ is of polynomial size. 

\begin{proposition}\em\label{prop:smallcertif}
Let $U$ be the ground set of  $\COVER$. Then  $\COVER$ has a local decision algorithm for $\NLD$, using certificates of size  $O(n(\log n+\log |U|))$ bits.
\end{proposition}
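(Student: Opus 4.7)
The plan is to exhibit an explicit prover--verifier scheme for $\COVER$ based on a spanning tree together with \emph{subtree summaries} of the $e$-values. Let $(G,x)\in\COVER$. The prover first picks a witness pair $(u^*,i^*)$ with $S^*:=S_{i^*}(u^*)=\{e(v):v\in V(G)\}$, together with a spanning tree $T$ of $G$ rooted at $u^*$. At each node $u$ the certificate will be $c(u)=(S^*,p(u),d(u),E(u),I(u))$, where $S^*\subseteq U$ is the alleged covering set (the same copy placed at every node, stored as a sorted list of at most $n$ elements of $U$), $p(u)$ is a port number designating $u$'s parent in $T$ (set to $\bot$ at the root $u^*$), $d(u)$ is the hop-distance in $T$ from $u$ to $u^*$, $E(u)\subseteq U$ is the alleged value of $\{e(v):v\in T_u\}$ where $T_u$ is the subtree of $T$ rooted at $u$, and $I(u)$ is $i^*$ if $u=u^*$ and $\bot$ otherwise. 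The set fields $S^*$ and $E(u)$ cost $O(n\log|U|)$ bits each, the pointer and distance fields cost $O(\log n)$ bits, so each certificate fits in $O(n(\log n+\log |U|))$ bits.

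The verifier at $u$ performs only radius-$1$ local checks: (i) $e(u)\in S^*$; (ii) the $S^*$-field of $c(u)$ coincides with that of every neighbor; (iii) if $p(u)=\bot$ then $d(u)=0$, $E(u)=S^*$, and $I(u)\in\{1,\ldots,k_u\}$ with $S_{I(u)}(u)=S^*$; if $p(u)\neq\bot$ then $I(u)=\bot$ and the neighbor $v$ designated by $p(u)$ satisfies $d(v)=d(u)-1$; (iv) letting $C(u)$ be the set of neighbors $v$ whose port $p(v)$ points back to $u$, the identity $E(u)=\{e(u)\}\cup\bigcup_{v\in C(u)}E(v)$ holds. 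Completeness is immediate from the prover's construction.

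For soundness, consider any certificate assignment that survives every check. Check (ii) and the connectedness of $G$ force a single global set $S^*$. The strict distance decrease in check (iii) rules out any cycle among the parent pointers, so at least one root (a node with $p=\bot$) exists, and an induction on $d$ using check (iv) yields $E(r)=\{e(v):v\in T_r\}$ for every root $r$. The one subtlety, which I expect to be the only real obstacle, is that a cheating prover may try to place several roots: the parent pointers nevertheless form a spanning forest whose root-subtrees partition $V(G)$, and since check (iii) imposes $E(r)=S^*$ at every root~$r$, we simultaneously obtain $S^*=\{e(v):v\in T_r\}\subseteq \{e(v):v\in V(G)\}$ from any single root, and $\{e(v):v\in V(G)\}\subseteq S^*$ by aggregating check (i) over all~$v$. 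Therefore $S^*=\{e(v):v\in V(G)\}$, and picking any root $r$ together with its index $I(r)$ witnesses $S_{I(r)}(r)=\{e(v):v\in V(G)\}$, proving $(G,x)\in\COVER$.
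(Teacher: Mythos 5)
Your scheme (spanning tree plus subtree summaries) is a genuinely different construction from the paper's, which instead stores at each node a tuple $c(u)=(d_0,(d_1,e_1),\dots,(d_n,e_n))$: the putative list of \emph{all} $e$-values in the graph, together with one distance counter per listed value (pointing toward a node realizing that value) and one distance counter toward the covering node. The paper's verifier only ever needs to check that \emph{some} neighbor carries a strictly smaller counter; it never has to single out a particular neighbor. Your approach, by contrast, hinges on check (iv), which requires each node $u$ to determine exactly which neighbors are its children in $T$, and you encode this via a port number $p(u)$. That is the gap: in this paper's model a configuration is just $(G,x)$ with $G$ an unadorned simple graph, and the certificate must be a function of $(G,x)$ alone (independent of the identity assignment). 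There is no port numbering to refer to, and you cannot store the parent's identity either, since that would make $c$ depend on $\id$. In a graph with local symmetries (e.g.\ a node whose two neighbors have isomorphic views), no identity-independent field can disambiguate ``which neighbor is my parent,'' so $C(u)$ is ill-defined and the aggregation invariant $E(u)=\{e(u)\}\cup\bigcup_{v\in C(u)}E(v)$ cannot be verified. This is precisely the obstacle the paper's distance-based certificates are designed to sidestep.

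Two smaller remarks. First, the index field $I(u^*)$ is unnecessary and, left in, could blow the size bound when $k_{u^*}$ is large; better to have the root simply test $\exists i:\,S_i(u^*)=S^*$, as the paper does. Second, once the parent-pointer issue is set aside, the rest of your soundness argument (forest structure from strictly decreasing $d$, multiple roots handled by forcing $E(r)=S^*$ at every root, and $\{e(v):v\in V(G)\}\subseteq S^*$ from check~(i)) is sound, and the certificate size $O(n(\log n+\log|U|))$ would indeed be met. But as written the proposal presupposes a feature of the communication model that the paper does not grant.
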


\begin{proof}
Given $(G,x)\in\COVER$, where $G$ is an $n$-node graph, the prover assigns the following certificates to the nodes. For any $u\in V(G)$, we have $c(u)=(d_0,(d_1,e_1),(d_2,e_2),\dots,(d_n,e_n))$, where, for every $i\in\{0,\dots,n\}$, $d_i$ is a non-negative integer, and, for every $i\in\{1,\dots,n\}$, $e_i\in U$. This certificate is on $O(n(\log n+\log |U|))$ bits. The $d_i$'s measure distances: $d_0$ is the distance from $u$ to the node $v$ which has a set $S_i(v)$ covering $e$, and, for every $i\in\{1,\dots,n\}$, $d_i$ is the distance from $u$ to a node $u'$ with $e(u')=e_i$.  

The verifier acts as follows, in just one communication round. Every node $u$ checks that it has the same number of distance entries in its certificate as all its neighbors, and that the $i$th elements coincide between neighbors, for every $i=1,\dots,n$. Next, it checks that one and only one of its distances $d_i$ with $i\in\{1,\dots,n\}$ is null, and that $e(u)=e_i$.  Next, if $d_0=0$, it checks that it has a set $S_j(u) = \{e_i, i=1,\dots,n\}$. Finally, it checks that the distances are consistent, that is, for every $i$ such that $d_i\neq 0$, it checks that it has at least one neighbor whose $i$th distance is  smaller than $d_i$. If all tests are passed, then $u$ accepts, otherwise it rejects. 

By construction, if $(G,x)\in\COVER$ then all nodes accept. Conversely, let us assume that all nodes accept. Since the distances are decreasing, for each element $e_i$ there must exist at least one node $u$ such that $x(u)=e_i$. Conversely, every node has its element appearing in the certificate (because it must have one distance equal to~0). Finally, since the distance $d_0$ is decreasing, there must exist at least one node $u$ that has a set $S_j(u) = \{e_i, i=1,\dots,n\}$. This implies that $(G,x)\in\COVER$. 
\end{proof}

This latter result is in contradiction with the claim in~\cite{FraKorPel13} regarding the hardness of $\COVER$. The reason for this contradiction is that the local reduction used in~\cite{FraKorPel13} for reducing any language to $\COVER$ is too strong. Indeed,  it transforms a configuration $(G,x)$ into a configuration $(G,x')$ where the certificates used for proving $x'$ may depend on the identities of the nodes in $G$. This is in contradiction with the definitions of the  classes $\Sigma_k^{\scriptsize \local}$ and $\Pi_k^{\scriptsize \local}$, $k\geq 0$, for which the certificates must be independent of the identity assignment. In this section, we show that completeness results can be obtained using a more constrained notion of reduction which preserves the membership to the classes. 

Recall from \cite{FraKorPel13} that a local reduction of $\cL$ to $\cL'$ is a local algorithm $\cR$ which maps any configuration $(G,x)$ to a configuration $(G,y)$, where $y=R(G,x,\id)$ may depend of the identity assignment $\id$, such that: $(G,x)\in \cL$ if and only if, for every identity assignment $\id$ to the nodes of $G$, $(G,y) \in \cL'$ where $y=\cR(G,x,\id)$. Ideally, we would like $\cR$ the be \emph{identity-oblivious}, that is, such that the output of each node does not depend on the identity assignment. We use a concept somewhat intermediate between identity-oblivious reduction and the unconstraint reduction in~\cite{FraKorPel13}. 

\begin{definition}\em \label{def:reductino}
Let $\cC$ be a class of distributed languages, and let $\cL$ and $\cL'$ be two distributed languages.  Let $\cA$ be a   $\cC$-algorithm  deciding $\cL'$, and let $\cR$ be a local reduction of $\cL$ to $\cL'$. We say that $(\cR,\cA)$ is \emph{label-preserving} for $(\cL,\cL')$ if and only if, for any configuration $(G,x)$, the certificates used by the prover in $\cA$ for $(G,y)$ where $y=\cR(G,x,\id)$ are the same for all  identity assignments $\id$ to $G$. 
\end{definition}

The following result shows that the notion of reduction in Definition~\ref{def:reductino} preserves the classes of distributed languages.

\begin{lemma}\label{lem:preserveclasses}
Let $\cC$ be a class of distributed languages. Let $\cL$ and $\cL'$ be two distributed languages with $\cL'\in \cC$, and let $(\cR,\cA)$ be a label-preserving local reduction for $(\cL,\cL')$. Then $\cL\in \cC$.
\end{lemma}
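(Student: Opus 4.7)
The plan is to construct a $\cC$-algorithm $\cA'$ for $\cL$ by composing the reduction $\cR$ with the given $\cC$-algorithm $\cA$ for $\cL'$. Specifically, on a configuration $(G,x)$ equipped with an identity assignment $\id$ and a tuple of alternating certificates $c_1,\dots,c_k$, each node $u$ first simulates $\cR$ locally in its neighborhood to obtain $y(u)=\cR(G,x,\id,u)$, and then runs $\cA$ on the derived configuration $(G,y)$ with the same certificates $c_1,\dots,c_k$ and identity $\id$, outputting whatever $\cA$ outputs at $u$. Since both $\cR$ and $\cA$ are local, $\cA'$ is local, with radius bounded by the sum of their radii.

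Next, I would verify the $\cC$-algorithm specification for $\cL$. On the positive side, if $(G,x)\in\cL$, then by the definition of local reduction we have $(G,y_\id)\in\cL'$ for every $\id$, where $y_\id=\cR(G,x,\id)$. Since $\cA$ is a $\cC$-algorithm for $\cL'$, for each $\id$ there exist winning prover certificates on $(G,y_\id)$; the label-preserving hypothesis in Definition~\ref{def:reductino} is exactly what is needed to assert that a \emph{single}, $\id$-independent family of such prover certificates serves uniformly for all $\id$. Declaring these certificates as the prover's choice in $\cA'$ then forces acceptance at every node, for every adversarial response and every identity assignment. On the negative side, if $(G,x)\notin\cL$, then $(G,y_\id)\notin\cL'$ for every $\id$, so $\cA$ provides a rejecting adversarial response against any prover strategy received by $\cA'$; this response is transferred back to $\cA'$ through the local simulation of $\cR$.

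The delicate point — and the reason the label-preserving notion is required in place of the more permissive reduction of~\cite{FraKorPel13} — is that the certificates used in $\cA'$ must be independent of $\id$, whereas $\cA$ is invoked on the $\id$-dependent image $(G,y_\id)$. I expect the main obstacle to be reconciling this quantifier mismatch: the positive case is precisely what the label-preserving hypothesis is engineered to handle, while the negative case rests on the fact that the $\cC$-algorithm schema already places $\exists c_i$ strictly before $\forall \id$ within $\cA$'s game on each $(G,y_\id)$, so the rejecting adversarial choice can be made without consulting $\id$ and therefore transports consistently back to $\cA'$ on $(G,x)$.
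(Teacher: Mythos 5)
Your overall approach is the paper's approach: build $\cB=\cA\circ\cR$, pass the certificates through unchanged, and invoke the label-preserving hypothesis to make a single identity-independent certificate strategy serve for all identity assignments. The construction and the treatment of the positive case are correct.

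However, your closing remark about the negative case is wrong, and it is worth flagging because it misidentifies where the label-preserving hypothesis is used. You argue that when $(G,x)\notin\cL$, the rejecting existential certificate in $\cA$'s game ``can be made without consulting $\id$'' because the schema places $\exists c_i$ strictly before $\forall\id$. But that $\forall\id$ quantifier in $\cA$'s specification ranges over the identity assignment \emph{given to $\cA$ on a fixed configuration}; it says nothing about the identity used to \emph{produce} that configuration. Here the configuration handed to $\cA$ is $(G,y_\id)$ with $y_\id=\cR(G,x,\id)$, so the rejecting certificate $c_1$ guaranteed by $\cA$'s correctness on $(G,y_\id)\notin\cL'$ may a priori vary with $\id$, precisely because $y_\id$ does. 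To obtain a \emph{single} $c_1$ that works in $\cB$'s specification (where $\exists c_1$ is quantified before the composed $\forall\id$), you must again invoke Definition~\ref{def:reductino}: it requires the existential certificates used by $\cA$ on the images $\cR(G,x,\id)$ to coincide across all $\id$, for every configuration $(G,x)$, legal or not. In short, label-preservation is needed symmetrically on both sides of the implication, exactly as in the positive case; your argument quietly assumed it, and your quantifier-ordering explanation does not supply it.
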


\begin{proof}
We describe a  local algorithm $\cB$ for deciding $\cL$ in $\cC$. In essence, $\cB=\cA\circ R$. More precisely, let $(G,x)$ be a configuration, with an arbitrary identity assignment $\id$, and let $y=R(G,x,\id)$. Let $c$ be a certificate assigned by the prover in $\cA$ for configuration $(G,y)$. (Note that this certificate may depend on some previously set certificates, as in, e.g., $\Pi_1^{\scriptsize \local}$). The certificate assigned by the prover in $\cB$ for configuration $(G,x)$ is $c$. The algorithm $\cB$ then proceeds as follows. Given $(G,x)$, it computes $(G,y)$ using $R$, and then applies $\cA$ on $(G,y)$ using the certificates constructed by the prover in $\cB$. Algorithm $\cB$ then outputs the decision taken by $\cA$. Since $R$ preserves the membership to the languages, and since the certificates assigned by the prover in $\cA$ for configurations resulting from the application of $R$ are independent of the identity assignment, the certificates chosen under the identity assignment $\id$ are also appropriate for any other identity assignment $\id'$. This guarantees the correctness of $\cB$, and thus $\cL\in\cC$.
\end{proof}

In the following problem, every node $u$ of a configuration $(G,x)$ is given a family  $\cF(u)$ of configurations, each described by an adjacency matrix representing a graph, and a 1-dimensional array representing the inputs to the nodes of that graph. In addition, every node $u$ has an input string $x'(u)\in\{0,1\}^*$. Hence, $(G,x')$ is also a configuration. The actual configuration $(G,x)$ is legal if $(G,x')$ is missing in all families $\cF(u)$ for every $u\in V(G)$, i.e., $(G,x')\notin \cF$ where $\cF=\cup_{u\in V(G)}\cF(u)$. In short,  we consider the language
\[
\MISS=\{(G,x): \forall u\in V(G), x(u)=(\cF(u),x'(u)) \;\mbox{and}\; (G,x')\notin \cF\}
\]
We show that $\MISS$ is among the hardest problems, under local label-preserving reductions. Note that $\MISS\notin\NLD$ (it is not closed under lift: it may be the case that $(G,x')\notin\cF$ but a lift of $(G,x')$ is in $\cF$). 

\begin{proposition}\label{prop:MISSisPi2-complete}\em
$\MISS$ is $\Pi_2^{\scriptsize \local}$-complete for local label-preserving reductions. 
\end{proposition}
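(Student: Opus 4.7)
The statement has two components: membership $\MISS \in \Pi_2^{\scriptsize \local}$, and $\Pi_2^{\scriptsize \local}$-hardness of $\MISS$ under local label-preserving reductions. Membership follows immediately from Proposition~\ref{prop:Pi2isall} ($\Pi_2^{\scriptsize \local}=\all$), so the substance lies in the hardness direction. Given any distributed language $\cL$, I would define a 0-round reduction $\cR$ mapping $(G,x,\id)$ to $(G,y)$ with $y(u)=(\cF(u), x'(u))$, where $x'(u)=x(u)$ and
\[
\cF(u) = \{(H,z)\in \bar{\cL} : \mbox{description size of }(H,z)\leq B(\id(u), |x(u)|)\},
\]
for a sufficiently fast-growing Turing-computable bound $B$ chosen so that, for every $(G,x)$, some node $u^\star$ satisfies $B(\id(u^\star),|x(u^\star)|) \geq |(G,x)|$. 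Each $\cF(u)$ is finite and computable because $\bar{\cL}$ is Turing-decidable. Correctness is then straightforward: $(G,x) \in \cL$ iff $(G,x)\notin\bar{\cL}$ iff $(G,x)\notin \bigcup_u \cF(u)$ iff $(G,y)\in \MISS$.

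For label-preservation, I would design a $\Pi_2^{\scriptsize \local}$-algorithm $\cA$ for $\MISS$ modeled on the proof of Proposition~\ref{prop:Pi2isall}. The existential certificate $c_1(u) = (M,\data,\ind(u))$ encodes a candidate adjacency matrix $M$, data array $\data$, and per-node index $\ind(u)$, set by the prover so that $(M,\data)$ matches $(G,x)$ under a canonical (identity-independent) numbering. The universal $c_2$ and the case analysis (Cases 0--4) are inherited from Proposition~\ref{prop:Pi2isall}; the consistent branch (Case~1) is modified so that each node additionally verifies $(M,\data)\notin \cF(u)$---a Turing-decidable membership test carried out on $y(u)$ alone. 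Because $c_1$ and the $c_2$-witnesses are constructed solely from $(G,x)$ via a canonical labeling, they coincide across every identity assignment $\id$, fulfilling Definition~\ref{def:reductino}. Lemma~\ref{lem:preserveclasses} then places $\cL$ into $\Pi_2^{\scriptsize \local}$ via this reduction, yielding the hardness direction.

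The main obstacle is twofold. First, one must design $B$ so that $\bigcup_u \cF(u)$ always contains $(G,x)$ when $(G,x)\in\bar{\cL}$, for every identity assignment; this is handled by picking $B$ to grow fast enough (e.g., $B(i,s)=2^{2^{i+s+c}}$ for a suitable absolute constant $c$) so that the max-over-nodes bound exceeds $|(G,x)|$, relying only on the fact that identities are distinct nonnegative integers and $u$'s own input contributes a known positive quantity $|x(u)|$. Second, one must verify that the Proposition~\ref{prop:Pi2isall} template transfers faithfully: the witnesses in Cases 0--4 are determined by $(G,x)$ alone (not by $\id$), and the modified Case~1 test is performed locally on $y(u)$, so no case forces the prover to leak identity information through its certificates. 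Once both points are in place, $\cR$ is a 0-round, correct, and label-preserving reduction, completing the argument.
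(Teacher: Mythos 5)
Your proposal matches the paper's proof in all essential respects: membership is immediate from $\Pi_2^{\scriptsize \local}=\all$; the hardness reduction is a $0$-round map $(G,x,\id)\mapsto(G,y)$ with $y(u)=(\cF(u),x(u))$, where $\cF(u)$ enumerates illegal configurations within a size bound that grows unboundedly in $\id(u)$ and $|x(u)|$ (the paper uses $\omega(u)=2^{|\id(u)|+|x(u)|}$, bounding node count and input values, while you use a double-exponential bound on description size—functionally equivalent); the verifier is the generic algorithm of Proposition~\ref{prop:Pi2isall} with Case~1's test replaced by $(M,\data)\notin\cF(u)$; and label-preservation follows because the $c_1$ and $c_2$ certificates depend only on $(G,x')=(G,x)$, never on the $\cF(u)$-component of $y(u)$. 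This is the paper's argument with inessential variations in the choice of bounding function.
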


\begin{proof}
Let $\cL$ be a distributed language. We describe a local label-preserving reduction  $(R,\cA)$ for $(\cL,\MISS)$ with respect to $\Pi_2^{\scriptsize \local}$. 

In essence, the local algorithm $\cA$ for deciding $\MISS$ in $\Pi_2^{\scriptsize \local}$ is  the generic algorithm described in the proof of Proposition~\ref{prop:Pi2isall}. Recall that, in this generic algorithm, on a legal configuration $(G,x)$, the existential $c_2$ certificate in $\cA$ is pointing to an inconsistency in the given $c_1$ certificate which is supposed to describe the configuration $(G,x)$. And, on an illegal configuration  $(G,x)$, the existential $c_1$ certificate in $\cA$ does provide an accurate description of the  configuration $(G,x)$. For the purpose of label-preservation, we slightly modify the generic algorithm for $\MISS$. Instead of viewing $c_1$ as a description of the configuration $(G,x)$, the algorithm views it as a description of $(G,x')$ where, at each node $u$, $x'(u)$ is the second item in $x(u)$ (the first item is the family $\cF(u)$). The algorithm is then exactly the same as the generic algorithm with the only modification that the test when the flag $f(u)=1$ is not regarding whether $(G,x')\in\MISS$, but whether $(G,x')\notin \cF(u)$. On a legal configuration, all nodes accept. On an illegal instance, a node with $(G,x')\in \cF(u)$ rejects. 

The reduction $R$ from $\cL$ to $\MISS$ proceeds as follows, in a way similar to the one in~\cite{FraKorPel13}. A node $u$ with identity $\id(u)$ and input $x(u)$ computes its \emph{width} $\omega(u)=2^{|\id(u)|+|x(u)|}$ where $|s|$ denotes the length of a bit-string~$s$. Then $u$ generates all configurations $(H,y)\notin \cL$ such that $H$ has at most $\omega(u)$ nodes and $y(v)$ has value at most $\omega(u)$, for every node~$v$ of $H$. It  places all these configurations in $\cF(u)$. The input $x'(u)$ is simply $x'(u)=x(u)$.  If $(G,x)\in \cL$, then $(G,x)\notin\cF$ since only illegal instances are in $\cF$, and thus $(G,R(G,x))\in \MISS$. Conversely, if $(G,x)\notin \cL$, then  $(G,R(G,x))\notin\MISS$. Indeed, there exists at least one node $u$ with identity $\id(u)\geq n$, which guarantees that $u$  generates the graph $G$. If no other node $u'$ has width $\omega(u')>n$ then $u$ generates $(G,x)\in\cF(u)$. If there exists a node $u'$ with $\omega(u')>n$ then $u'$ generates $(G,x)\in\cF(u')$. In each case, we have $(G,x)\in\cF$, and thus $(G,R(G,x))\notin\MISS$.

It remains to show that the  existential certificate used in $\cA$ for all configurations $(G,R(G,x))$ are the same for any given $(G,x)$, independently of the identity assignment to $G$ used to perform the reduction $R$. This directly follows from the nature of $\cA$ since the certificates do not depend on the families $\cF(u)$'s but only on the bit strings $x'(u)$'s. 
\end{proof}

The following language is defined as $\MISS$ by replacing $\cF$ by the closure under lift $\cF^{\uparrow}$ of~$\cF$. That is, $\cF^{\uparrow}$ is composed of $\cF$ and all the lifts of the configurations in $\cF$. 
\[
\MISSLIFT=\{(G,x): \forall u\in V(G), x(u)=(\cF(u),x'(u)) \;\mbox{and}\; (G,x')\notin \cF^{\uparrow}\}
\]
We show that $\MISSLIFT$ is the hardest problem in $\NLD$. 

\begin{proposition}\label{prop:MISSLIFTISCOMPLETE}\em
$\MISSLIFT$ is $\NLD$-complete (and $\overline{\MISSLIFT}$ is {\rm co-}$\NLD$-complete) under  label-preserving reduction. 
\end{proposition}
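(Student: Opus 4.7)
The plan splits into two parts: establishing $\MISSLIFT\in\NLD$ and exhibiting a label-preserving reduction from each $\cL\in\NLD$ to $\MISSLIFT$; the $\coNLD$ claim for $\overline{\MISSLIFT}$ then follows by duality.

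For membership, I would appeal to Lemma~\ref{lem:lift} and verify that $\MISSLIFT$ is closed under lift. Let $(G,x)\in\MISSLIFT$ with $x(u)=(\cF(u),x'(u))$, and let $(G'',x'')$ be a $t$-lift of $(G,x)$ via a (surjective) covering $\phi\colon V(G'')\to V(G)$; then $x''(v)=(\cF(\phi(v)),x'(\phi(v)))$, so the family pulls back to $\cF$ itself, and $(G'',x''')$ with $x'''(v)=x'(\phi(v))$ is a $t$-lift of $(G,x')$. Were $(G'',x''')$ in $\cF^{\uparrow}$ via some covering $\psi\colon V(G'')\to V(H)$ of $(H,y)\in\cF$, one could pick for each $v\in V(G)$ a pre-image $u\in\phi^{-1}(v)$ and set $\rho(v)=\psi(u)$; since the $t$-ball around $v$ is isomorphic to that around $u$ via $\phi$, and the latter is isomorphic to that around $\psi(u)$ via $\psi$, the map $\rho$ would exhibit $(G,x')$ as a lift of $(H,y)$, contradicting $(G,x')\notin\cF^{\uparrow}$. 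Hence $(G'',x'')\in\MISSLIFT$.

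For hardness, I would mimic the reduction of Proposition~\ref{prop:MISSisPi2-complete}: given $\cL\in\NLD$, each node $u$ computes $\omega(u)=2^{|\id(u)|+|x(u)|}$, places in $\cF(u)$ every $(H,y)\notin\cL$ with $|V(H)|\leq\omega(u)$ and input values of size at most $\omega(u)$, and sets $x'(u)=x(u)$, producing $y(u)=(\cF(u),x(u))$. The easy direction is immediate: if $(G,x)\notin\cL$, the node of maximum identity has width at least $n$ and places $(G,x)$ into $\cF\subseteq\cF^{\uparrow}$, so $(G,y)\notin\MISSLIFT$. The main obstacle lies in the converse: for $(G,x)\in\cL$ we must show $(G,x)$ is not a lift of any $(H,y)\in\cF\subseteq\bar{\cL}$. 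I would argue this by invoking the $\NLD$-verifier $\cA_{\cL}$ of $\cL$: any accepting $\cL$-certificate on $(G,x)$ should descend along a covering $\phi\colon V(G)\to V(H)$ to an accepting certificate on $(H,y)$, forcing $(H,y)\in\cL$ and yielding the desired contradiction. The descent requires a \emph{fiber-consistent} certificate, which does not hold for an arbitrary certificate; to produce one I would leverage $\NLD=\NLDO$ to restrict to identity-oblivious certificates and then perform a canonical choice within each fiber of $\phi$. This fiber-consistency step is the technical heart of the proof and I expect it to be the main difficulty to nail down cleanly.

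Finally, the reduction is label-preserving because, exactly as in Proposition~\ref{prop:MISSisPi2-complete}, the existential certificate produced by the $\MISSLIFT$-verifier depends only on the identity-independent data $x'(u)=x(u)$ and on the structure of $G$, not on the family $\cF(u)$, and is hence unchanged under any identity assignment. The $\coNLD$-completeness of $\overline{\MISSLIFT}$ then follows by duality: every co-$\NLD$ language is the complement of some $\cL\in\NLD$, which reduces label-preservingly to $\MISSLIFT$ by the above, yielding the corresponding reduction to $\overline{\MISSLIFT}$.
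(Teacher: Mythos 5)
Your high-level plan (membership via closure under lift, hardness via the Proposition~\ref{prop:MISSisPi2-complete} reduction, label-preservation via the structure of the certificate) matches the paper's, and your membership argument is a more detailed version of the paper's one-line claim. The difference is in the hardness step, and there you have put your finger on something real: the paper simply asserts ``the reduction $R$ from $\cL$ to $\MISSLIFT$ is the same as the one in the proof of Proposition~\ref{prop:MISSisPi2-complete}'' and never verifies the forward direction $(G,x)\in\cL \Rightarrow (G,R(G,x,\id))\in\MISSLIFT$. This direction requires exactly what you identify: that a legal $(G,x)$ is never a $t$-lift of an illegal $(H,y)\in\cF$. Since $\cF$ contains all illegal configurations below a size threshold, this is equivalent to $\cL$ being closed \emph{down} under lift, which is the $\coNLD$ characterization, not the $\NLD$ one. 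For a language $\cL\in\NLD\setminus\coNLD$ such as $\ALTS$, the property genuinely fails: the $8$-cycle with two antipodal selected nodes lies in $\ALTS$ yet is a $1$-lift of the $4$-cycle with a single selected node, which is illegal and small enough to be placed in $\cF(u)$ by the reduction.

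This also means the certificate-descent repair you sketch cannot be made to work as stated: you are trying to prove a statement that is false for $\cL=\ALTS$. An accepting $\NLD$-certificate on $(C_8,\mbox{two selected})$ need not descend to an accepting certificate on $(C_4,\mbox{one selected})$ precisely because the latter really is outside $\cL$; no amount of fiber-consistency or appeal to $\NLD=\NLDO$ can force a contradiction out of a true situation. So while you were right to flag this step as ``the technical heart,'' the obstruction is not merely technical --- the reduction of Proposition~\ref{prop:MISSisPi2-complete} does not transfer to $\MISSLIFT$ unchanged, and the paper's own proof shares this gap since it skips the verification entirely. Closing it would require either redefining $\cF(u)$ (for instance, restricting to configurations none of whose lifts are legal, and then handling the converse direction with a different argument) or adopting a different target language; your proposal does not contain the ingredient needed to do either.
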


\begin{proof}
We do have $\MISSLIFT\in \NLD$ because $\MISSLIFT$ is closed under lift. Let $\cL\in\NLD$. The reduction $R$ from $\cL$ to $\MISSLIFT$ is the same as the one in the proof of Proposition~\ref{prop:MISSisPi2-complete}. We describe a local  algorithm for deciding $\MISSLIFT$ in $\NLD$ which is label-preserving with respect to $R$.  The certificate $c(u)$ is a description of $(G,x')$ of the form $(M(u),\data(u),\ind(u))$ as certificate~$c_1$ in the proof of Proposition~\ref{prop:Pi2isall}. This guarantees label-preservation with respect to $R$. For the verification part, each node $u$ checks whether $(M(u),\data(u),\ind(u))$  fits with its local neighborhood. If no, it rejects. Otherwise, it checks whether $(M(u),\data(u))\notin \cF(u)^{\uparrow}$,  accepts if yes, and rejects otherwise. On a legal configuration $(G,x)\in\MISSLIFT$, with the correct certificate $c=(G,x')$, all nodes accept. On an illegal configuration $(G,x)\notin\MISSLIFT$, there are two cases. If $(M(u),\data(u))$ is neither $(G,x')$ nor a lift of $(G,x')$, then some node will detect an inconsistency, and reject. If  $(M(u),\data(u))$ is an accurate description of $(G,x')$ or of a lift of $(G,x')$, then some node will detect that $(M(u),\data(u))\in\cF^{\uparrow}(u)$, and therefore will reject. 
\end{proof}

\section{Conclusion}
  
Our investigation raises several intriguing questions. In particular, identifying a distributed language in $\Pi_1^{\scriptsize \local}\setminus \LD$ was an uneasy task. We succeeded to find one such language, but we were unable to identify a $\Pi_1^{\scriptsize \local}$-complete problem, if any. In fact, completeness results are very sensitive to the type of local reductions that is used. We have identified label-preserving local reduction as an appropriate notion. It would be interesting to know whether $\NLD$-complete and $\Pi_2^{\scriptsize \local}$-complete languages exist for \emph{identity-oblivious} reductions. This latter type of reductions is indeed the most natural one in a context in which nodes may not want to leak information about their identities. It is easy to see that the class $\coLD$ has  a complete language for identity-oblivious reductions, namely, $\OR$ is $\coLD$-complete for identity-oblivious reductions. However, we do not know whether this can be achieved for $\NLD$ or~$\Pi_2^{\scriptsize \local}$. 

This paper is aiming at providing a proof of concept for the notion of interactive local verification: $\Pi_2^{\scriptsize \local}$ can be viewed as the interaction between two players, with conflicting objectives, one is aiming at proving the instance, while the other is aiming at disproving it. As a consequence, for this first attempt, we voluntarily ignored important parameters such as the size of the certificates, and the individual computation time, and we focussed only on the locality issue. The impact of limiting the certificate size was recently investigated in~\cite{FFH16}. Regarding the individual computation time,  our completeness results involve local reductions that are very much time consuming at each node. Insisting on local reductions involving polynomial-time computation at each node is crucial for practical purpose. At this point, we do not know whether non trivial hardness results can be established under polynomial-time local reductions. Proving or disproving the existence of such hardness results is let as an open problem.

\paragraph{Acknowledgement:} The authors are thankful to Laurent Feuilloley for fruitful discussions about the topic of the paper.

\newpage

\bibliographystyle{plain}


\end{document}